\newif\ifCONF{}
\newif\ifFULL{}
    \def\BibTeX{{\rm B\kern-.05em{\sc i\kern-.025em b}\kern-.08em
        T\kern-.1667em\lower.7ex\hbox{E}\kern-.125emX}}
\theoremstyle{plain}
\newtheorem{theorem}{Theorem}[section]
\newtheorem{lemma}{Lemma}[section]
\newtheorem*{remark}{Remark}
\theoremstyle{definition}
\newtheorem{definition}{Definition}[section]
\newtheorem{property}{Property}[section]
\crefname{theorem}{theorem}{theorems}
\Crefname{theorem}{Theorem}{Theorems}
\crefname{lemma}{lemma}{lemmas}
\Crefname{lemma}{Lemma}{Lemmas}
\crefname{remark}{remark}{remarks}
\Crefname{remark}{Remark}{Remarks}
\crefname{definition}{definition}{definitions}
\Crefname{definition}{Definition}{Definitions}
\crefname{property}{property}{properties}
\Crefname{property}{Property}{Properties}
\newlist{inlinelist}{enumerate*}{1}
\setlist[inlinelist]{label=\arabic*)}
\newcommand{\rv}[1]{\mathbf{#1}}
\newcommand{\Game}{\ensuremath{\rv{G}}}
\newcommand{\Set}{\pcalgostyle{Set}}
\newcommand{\Derive}{\pcalgostyle{Derive}}
\newcommand{\DerivePriv}{\pcalgostyle{DPriv}}
\newcommand{\DerivePub}{\pcalgostyle{DPub}}
\newcommand{\Sign}{\pcalgostyle{Sign}}
\newcommand{\Verify}{\pcalgostyle{Vrfy}}
\newcommand{\SetDerivePPSignVerify}{\ensuremath{(\Set, \DerivePub, \DerivePriv, \allowbreak \Sign, \Verify)}}
\newcommand{\Pub}{\pcalgostyle{Pub}}
\newcommand{\Ver}{\pcalgostyle{Ver}}
\newcommand{\Sec}{\pcalgostyle{Sec}}
\newcommand{\PubSec}{\ensuremath{(\Pub, \Sec)}}
\newcommand{\ppssppkk}{\ensuremath{(\pp{}, \{\derkeyi\}_{\cvi\in\cV})}}
\newcommand{\Keygen}{\pcalgostyle{Keygen}}
\newcommand{\Hyb}{\ensuremath{\Game{}}}
\newcommand{\crtset}{\ensuremath{\set{\hatsigmai}_{\cvi \in \cV}}}
\newcommand{\pkK}{\ensuremath{\wpk}}
\newcommand{\pkX}{\ensuremath{\overline{\pk}}}
\newcommand{\pkXi}{\ensuremath{\pkX_{i}}}
\newcommand{\skX}{\ensuremath{\overline{\sk}}}
\newcommand{\skXi}{\ensuremath{\skX_{i}}}
\newcommand{\HybZero}{\ensuremath{\Hyb_0}}
\newcommand{\PRFOracle}{\ensuremath{\Oracle_{\fash}}}
\newcommand{\Rand}{\ensuremath{\mathcal{F}_{\secpar}}}
\newcommand{\CorrOracle}{\ensuremath{\Oracle_{\corr}}}
\newcommand{\SignOracle}{\ensuremath{\Oracle_{\Sign}}}
\newcommand{\GamePRFZero}{\ensuremath{\Game{}_{\fash{},\Adv}^{\mathsf{prf}-0}(\secpar)}}
\newcommand{\GamePRFOne}{\ensuremath{\Game{}_{\fash{}, \Adv}^{\mathsf{prf}-1}(\secpar)}}
\newcommand{\GameHSigForge}{\ensuremath{\Game{}_{\Pi, \Adv}^{\mathsf{heuf}}(\secpar, \cG{}})}
\newcommand{\GameSigForge}{\ensuremath{\Game{}_{\Pi, \Adv}^{\mathsf{euf}}(\secpar})}
\newcommand{\GameSigForgeSig}{\ensuremath{\Game{}_{\Sig, \Adv'}^{\mathsf{euf}}(\secpar})}
\newcommand{\GameSKeyInd}{\ensuremath{\Game{}_{\Pi, \Adv}^{\mathsf{sk-ind}}(\secpar, \cG{}})}
\newcommand{\GameSKeyIndDHKA}{\ensuremath{\Game{}_{\DHKA, \Adv}^{\mathsf{sk-ind}}(\secpar, \cG{}})}
\newcommand{\GameSKeyIndB}{\ensuremath{\Game{}_{\DHKA{}, \Adv}^{\mathsf{sk-ind}-b}(\secpar, \cG{}})}
\newcommand{\GameSKeyIndZero}{\ensuremath{\Game{}_{\DHKA{}, \Adv}^{\mathsf{sk-ind}-0}(\secpar, \cG{}})}
\newcommand{\GameSKeyIndOne}{\ensuremath{\Game{}_{\DHKA{}, \Adv}^{\mathsf{sk-ind}-1}(\secpar, \cG{}})}
\newcommand{\GameEncPi}{\ensuremath{\Game{}_{\Pi, \Adv}^{\mathsf{sem}}(\secpar})}
\newcommand{\Sig}{\ensuremath{\Sigma}}
\newcommand{\DHKA}{\ensuremath{\Gamma}}
\newcommand{\sDHKA}{\DHKA{} = \SetDeriveDHKA{}}
\newcommand{\SetDHKA}{\ensuremath{\Set_{\DHKA}}}
\newcommand{\DeriveDHKA}{\ensuremath{\Derive_{\DHKA}}}
\newcommand{\SetDeriveDHKA}{\ensuremath{(\SetDHKA, \DeriveDHKA)}}
\newcommand{\abort}{\ensuremath{\mathsf{abort}}}
\newcommand{\eabort}{\ensuremath{E_{\abort}}}
\newcommand{\enabort}{\ensuremath{\lnot E_{\abort}}}
\newcommand{\Adv}{\pcalgostyle{A}}
\newcommand{\Advfirst}{\ensuremath{\pcalgostyle{A}'}}
\newcommand{\Dist}{{\ensuremath{\pcalgostyle{D}}}}
\newcommand{\Oracle}{\ensuremath{\pcalgostyle{O}}}
\newcommand{\cl}[1]{%
    \ifthenelse{\isempty{#1}}
    {\ensuremath{l}}
    {\ensuremath{l_{#1}}}
}
\newcommand{\cli}{\ensuremath{\cl{i}}}
\newcommand{\clj}{\ensuremath{\cl{j}}}
\newcommand{\clhfirst}{\ensuremath{\cl{h}'}}
\newcommand{\dlhfirst}{{\ensuremath{\clhfirst{} = \cvh \concat \cwh'}}}
\newcommand{\clzero}{\ensuremath{\cl{0}}}
\newcommand{\dlzero}{\ensuremath{\cv{0}}}
\newcommand{\sclzero}{\ensuremath{\clzero = \dlzero}}
\newcommand{\dli}{\ensuremath{\cvi}}
\newcommand{\dlvi}{\ensuremath{\cvi \concat \cwi}}
\newcommand{\sli}{\ensuremath{\cli = \dli}}
\newcommand{\sliver}{\ensuremath{\cli = \dli \concat \cw{i}}}
\newcommand{\cw}[1]{%
    \ifthenelse{\isempty{#1}}
    {\ensuremath{w}}
    {\ensuremath{w_{#1}}}
}
\newcommand{\cwi}{\ensuremath{\cw{i}}}
\newcommand{\cwj}{\ensuremath{\cw{j}}}
\newcommand{\cwh}{\ensuremath{\cw{h}}}
\newcommand{\ct}[1]{%
    \ifthenelse{\isempty{#1}}
    {\ensuremath{t}}
    {\ensuremath{t_{#1}}}
}
\newcommand{\cti}{\ensuremath{\ct{i}}}
\newcommand{\ctj}{\ensuremath{\ct{j}}}
\newcommand{\cthfirst}{\ensuremath{\ct{h}'}}
\newcommand{\dt}[2]{\ensuremath{\fash_{#1}(00 \concat #2)}}
\newcommand{\dthfirst}{\ensuremath{\dt{\cSh'}{\clhfirst}}}
\newcommand{\dti}{\ensuremath{\dt{\cSi}{\cli}}}
\newcommand{\sti}{\ensuremath{\cti = \dti}}
\newcommand{\sthfirst}{\ensuremath{\cthfirst = \dthfirst}}
\newcommand{\crr}[1]{%
    \ifthenelse{\isempty{#1}}
    {\ensuremath{r}}
    {\ensuremath{r_{#1}}}
}
\newcommand{\crij}{\crr{ij}}
\newcommand{\dr}[2]{\ensuremath{\fash_{#1}(10 \concat #2)}}
\newcommand{\srij}{\ensuremath{\crij = \dr{\cti}{\clj}}}
\newcommand{\cv}[1]{%
    \ifthenelse{\isempty{#1}}
    {\ensuremath{v}}
    {\ensuremath{v_{#1}}}
}
\newcommand{\cvi}{\ensuremath{\cv{i}}}
\newcommand{\cvt}{\ensuremath{\cv{t}}}
\newcommand{\cvtone}{\ensuremath{\cv{t-1}}}
\newcommand{\cvj}{\ensuremath{\cv{j}}}
\newcommand{\cvp}{\ensuremath{\cv{p}}}
\newcommand{\cvpfirst}{\ensuremath{\cv{p'}}}
\newcommand{\cvh}{\ensuremath{\cv{h}}}
\newcommand{\cvstar}{\ensuremath{\cv{}^{*}}}
\newcommand{\cvzero}{\ensuremath{\cv{0}}}
\newcommand{\cvroot}{\ensuremath{\cv{R}}}
\newcommand{\cTTi}{\ensuremath{\cTT_{i}}}
\newcommand{\cTTj}{\ensuremath{\cTT_{j}}}
\newcommand{\cy}[1]{%
    \ifthenelse{\isempty{#1}}
    {\ensuremath{y}}
    {\ensuremath{y_{#1}}}
}
\newcommand{\cyij}{\ensuremath{\cy{ij}}}
\newcommand{\dy}[3]{\ensuremath{\enc_{#1}{(#2 \concat #3)}}}
\newcommand{\dyij}{\ensuremath{\dy{\crij}{\ctj}{\cxj}}}
\newcommand{\syij}{\ensuremath{\cyij \sample \dyij}}
\newcommand{\dykh}{\ensuremath{\dy{\crr{kh}}{\cthfirst}{\cxhfirst}}}
\newcommand{\sykh}{\ensuremath{\cy{kh}' \sample \dykh{}}}
\newcommand{\cS}[1]{%
    \ifthenelse{\isempty{#1}}
    {\ensuremath{S}}
    {\ensuremath{S_{#1}}}
}
\newcommand{\cSi}{\ensuremath{\cS{i}}}
\newcommand{\cSj}{\ensuremath{\cS{j}}}
\newcommand{\cSh}{\ensuremath{\cS{h}}}
\newcommand{\cShfirst}{\ensuremath{\cSh'}}
\newcommand{\cSeed}{\ensuremath{\cS{}}}
\newcommand{\sSeed}{\ensuremath{\cSeed \in \cSS}}
\newcommand{\cSzero}{\ensuremath{\cS{0}}}
\newcommand{\dS}[2]{\ensuremath{\fash_{#1}{(11 \concat #2)}}}
\newcommand{\dSzero}{\ensuremath{\dS{\cSeed}{\clzero}}}
\newcommand{\dSi}{\ensuremath{\dS{\cSj}{\cli}}}
\newcommand{\dShfirst}{\ensuremath{\dS{\cS{p}}{\clhfirst}}}
\newcommand{\scSzero}{\ensuremath{\cSzero = \dSzero{}}}
\newcommand{\scSi}{\ensuremath{\cSi = \dSi}}
\newcommand{\sShfirst}{\ensuremath{\cShfirst = \dShfirst}}
\newcommand{\ck}[1]{%
    \ifthenelse{\isempty{#1}}
    {\ensuremath{k}}
    {\ensuremath{k_{#1}}}
}
\newcommand{\cki}{\ensuremath{\ck{i}}}
\newcommand{\ski}{\ensuremath{\cki = \fash_{\cSi}(01 \concat \cli)}}
\newcommand{\msk}{\ensuremath{\mathsf{msk}}}
\newcommand{\mski}{\ensuremath{\msk_{i}}}
\newcommand{\mpk}{\ensuremath{\mathsf{mpk}}}
\newcommand{\mpki}{\ensuremath{\mpk_{i}}}
\newcommand{\wpk}{\ensuremath{\mathsf{wpk}}}
\newcommand{\sski}{\ensuremath{\sk_{i}}}
\newcommand{\sskj}{\ensuremath{\sk_{j}}}
\newcommand{\ppki}{\ensuremath{\pk_{i}}}
\newcommand{\ppkj}{\ensuremath{\pk_{j}}}
\newcommand{\mspk}{\ensuremath{(\msk, \mpk)}}
\newcommand{\ssppkifirst}{\ensuremath{(\skXi, \pkXi)}}
\newcommand{\cx}[1]{%
    \ifthenelse{\isempty{#1}}
    {\ensuremath{x}}
    {\ensuremath{x_{#1}}}
}
\newcommand{\cxi}{\ensuremath{\cx{i}}}
\newcommand{\cxj}{\ensuremath{\cx{j}}}
\newcommand{\cxhfirst}{\ensuremath{\cx{h}'}}
\newcommand{\cxstar}{\ensuremath{\cx{\cvstar}}}
\newcommand{\cxstarbar}{\ensuremath{\bar{x}_{\cvstar}}}
\newcommand{\dx}[2]{\ensuremath{\fash_{#1}(01 \concat #2)}}
\newcommand{\dxi}{\ensuremath{\dx{\cSi}{\cli}}}
\newcommand{\dxhfirst}{\ensuremath{\dx{\cSh'}{\clhfirst}}}
\newcommand{\sxi}{\ensuremath{\cxi = \dxi}}
\newcommand{\sxj}{\ensuremath{\cxj = \dxi}}
\newcommand{\sxhfirst}{\ensuremath{\cxhfirst = \dxhfirst}}
\newcommand{\cK}[1]{%
    \ifthenelse{\isempty{#1}}
    {\ensuremath{K}}
    {\ensuremath{K_{#1}}}
}
\newcommand{\cX}[1]{%
    \ifthenelse{\isempty{#1}}
    {\ensuremath{X}}
    {\ensuremath{X_{#1}}}
}
\newcommand{\csxi}{\ensuremath{(\cSi, \cxi)}}
\newcommand{\cG}{\ensuremath{G}}
\newcommand{\cGfirst}{\ensuremath{G'}}
\newcommand{\cGt}{\ensuremath{\cG{}_{\cTT{}}}}
\newcommand{\cE}{\ensuremath{E}}
\newcommand{\cV}{\ensuremath{V}}
\newcommand{\cSS}{\ensuremath{\mathcal{S}}}
\newcommand{\sG}{\ensuremath{\cG = (\cV, \cE)}}
\newcommand{\sE}{\ensuremath{(\cvi, \cvj) \in \cE}}
\newcommand{\cQQ}{\ensuremath{\mathcal{Q}}}
\newcommand{\cXX}{\ensuremath{\mathcal{X}}}
\newcommand{\cMM}{\ensuremath{\mathcal{M}}}
\newcommand{\cTT}{\ensuremath{\mathcal{T}}}
\newcommand{\cPP}{\ensuremath{\mathcal{P}}}
\newcommand{\sDerive}{\ensuremath{\Derive(\cG{}, \Pub{}, \cvi{}, \cvj{}, \cSi{})}}
\newcommand{\swDeriveDHKAi}{\ensuremath{\Derive_{\DHKA{}}(\cG{}, \Pub{}, \cvi{}, \cvi{}, \cSi{})}}
\newcommand{\swDerivePriv}{\ensuremath{\DerivePriv(\pp{}, \allowbreak\derkeyi{}, \cvi{}, \cvj{})}}
\newcommand{\swDerivePub}{\ensuremath{\DerivePub(\pp{}, \cvi{})}}
\newcommand{\sdSet}{\ensuremath{\Set(\secparam, \cG,\allowbreak \cSeed)}}
\newcommand{\sdSetDHKA}{\ensuremath{\Set_{\DHKA}(\secparam, \cG, \cSeed)}}
\newcommand{\sSign}{\ensuremath{\Sign{(\sski, m)}}}
\newcommand{\sSignP}{\ensuremath{\Sign_{\sski}(m)}}
\newcommand{\sVerify}{\ensuremath{\Verify{(\ppki, m, \sigma)}}}
\newcommand{\sVerifyJ}{\ensuremath{\Verify{(\ppkj, m, \Sign(\sskj, m))}}}
\newcommand{\ssigma}{\ensuremath{\sigma \sample \sSignP}}
\newcommand{\hatsigma}{\ensuremath{\cert}}
\newcommand{\cert}{\ensuremath{\mathsf{cert}}}
\newcommand{\hatsigmai}{\ensuremath{\cert_{i}}}
\renewcommand{\pp}{\ensuremath{\mathsf{pp}}}
\renewcommand{\Keygen}{\ensuremath{\mathsf{KGen}}}
\newcommand{\ssDerivePrivArculaZero}{\ensuremath{\sski = \DerivePriv(\pp{}, \derkey_0, \cvzero{}, \cvi{})}}
\newcommand{\sPub}{\ensuremath{\Pub : \cV \cup \cE \to	\bin^{*}}}
\newcommand{\dPub}{\ensuremath{\Pub : \cvi \mapsto \cli \qquad \Pub : (\cvi, \cvj) \mapsto \cyij}}
\newcommand{\sSec}{\ensuremath{\Sec : \cV \to \bin^\secpar \times \bin^\secpar}}
\newcommand{\dSec}{\ensuremath{\Sec : \cvi \mapsto (\cSi, \cxi)}}
\newcommand{\aSec}{\ensuremath{\Sec(\cvi)}}
\newcommand{\prff}{\ensuremath{\set{\fash_{k}}_{k \in \cKK_\secpar}}}
\newcommand{\Epsilon}{\ensuremath{\mathcal{E}}}
\newcommand{\Gen}{\pcalgostyle{Gen}}
\newcommand{\sGen}{\ensuremath{\Gen{(\secparam)}}}
\newcommand{\sEnc}{\ensuremath{\enc(\sk, m)}}
\newcommand{\sDec}{\ensuremath{\dec(\sk, c)}}
\newcommand{\cGroupg}{\ensuremath{g}}
\newcommand{\cGroupq}{\ensuremath{q}}
\newcommand{\cGroupPub}[1]{\ensuremath{\cGroupg{}^{#1}}}
\newcommand{\ckipub}{\ensuremath{\ppki = \cGroupPub{\sski}}}
\newcommand{\ECDSA}{\ensuremath{\Sigma}}
\newcommand{\KeygenECDSA}{\ensuremath{\Keygen_{\ECDSA}}}
\newcommand{\SignECDSA}{\ensuremath{\Sign_{\ECDSA}}}
\newcommand{\VerifyECDSA}{\ensuremath{\Verify_{\ECDSA}}}
\newcommand{\dECDSA}{\ensuremath{\ECDSA{} = \ensuremath{(\KeygenECDSA, \SignECDSA, \allowbreak \VerifyECDSA)}}}
\newcommand{\anc}{\ensuremath{Anc}}
\newcommand{\ancstar}{\ensuremath{\anc{(\cvstar)}}}
\newcommand{\danc}{\ensuremath{\anc(\cvi)} = \set{\cvj{} \mid \cvj{} \leadsto_w \cvi{}}}
\newcommand{\desc}{\ensuremath{Desc}}
\newcommand{\descvi}{\desc(\cvi)}
\newcommand{\ddesc}{\ensuremath{\desc(\cvi)} = \set{\cvj{} \mid \cvi{} \leadsto_w \cvj{}}}
\newcommand{\tDerive}{
    The deterministic derivation algorithm takes as input the access graph \cG{}, the public information \Pub{}, a source node \cvi{}, a target node \cvj{}, and the secret information \cSi{} of node \cvi{}. It outputs the cryptographic key \cx{j} associated to node \cvj{} if $\cvj \in \desc(\cvi)$.
}
\newcommand{\rekey}{
    \begin{enumerate}
        \item Increase the version \cwh{} of the node \cvh{} to a new value $\cwh{}'$ and update the \Ver{} data structure.
            Then, compute a new label \dlhfirst{} and update the corresponding entry in \Pub{}.
            Finally, compute a new secret \sShfirst{}, a new pair of secret and intermediate keys \sxhfirst{} and \sthfirst{}, and update the \Sec{} mapping.
        \item For each incoming edge $(\cv{k}, \cvh{})$ of \cvh{}, update the public information \sykh{} stored on the edge to reflect the updated values \cthfirst{} and \cxhfirst{}.
    \end{enumerate}
}
\newtheorem{construction}{Construction}
\newcommand{\cKK}{\ensuremath{\mathcal{K}}}
\newcommand{\cY}{\ensuremath{\mathcal{Y}}}
\newcommand{\pkforgery}{\pk^\bullet}
\newcommand{\mforgery}{\widetilde{m}}
\newcommand{\sigmaforgery}{\sigma^\bullet}
\newcommand{\sigmaforgerytilde}{\widetilde{\sigma}}
\newcommand{\hatsigmaforgery}{\hatsigma^\bullet}
\newcommand{\OPDUP}{\texttt{OP\_DUP}}
\newcommand{\OPCAT}{\texttt{OP\_CAT}}
\newcommand{\OPHASH}{\texttt{OP\_HASH160}}
\newcommand{\OPEQUAL}{\texttt{OP\_EQUAL}}
\newcommand{\OPEQUALVERIFY}{\texttt{OP\_EQUALVERIFY}}
\newcommand{\OPCHECKSIG}{\texttt{OP\_CHECKSIG}}
\newcommand{\OPTOALTSTACK}{\texttt{OP\_TOALTSTACK}}
\newcommand{\OPFROMALTSTACK}{\texttt{OP\_FROMALTSTACK}}
\newcommand{\OPCHECKDATASIGVERIFY}{\texttt{OP\_CHECKDATASIGVERIFY}}
\newcommand{\corr}{\ensuremath{\mathsf{Corr}}}
\newcommand{\derkey}{\ensuremath{\mathsf{d}}}
\newcommand{\derkeyi}{\ensuremath{\derkey_i}}
\newcommand{\kws}[0]{Hierarchical Deterministic Wallet; Hierarchical Key Assignment; Bitcoin; Blockchain.}
  \title{
  Arcula:\\ A Secure Hierarchical Deterministic Wallet\\for Multi-asset Blockchains \thanks{NB: appendices, if any, did not benefit from peer review.\newline A preprint of this paper has been
  deposited on ArXiv.}
  }
    \title{\texorpdfstring{\bf  Arcula:\\ A Secure Hierarchical Deterministic Wallet\\for Multi-asset Blockchains}{Arcula: A Secure Hierarchical Deterministic Wallet for Multi-asset Blockchains}}
    \author{Adriano Di Luzio\thanks{The author is also a PhD Candidate at Sapienza University of Rome, Italy.}}
    \author{Danilo Francati}
    \author{Giuseppe Ateniese}
    \affil{Stevens Institute of Technology, USA\\\texttt{\{adiluzio,dfrancat,gatenies\}@stevens.edu}}
\begin{document}

\ifCONF{}
    \maketitle
    \begin{abstract}
This work presents Arcula, a new design for hierarchical deterministic wallets that brings identity-based addresses to the blockchain.
Arcula is built on top of provably secure cryptographic primitives.
It generates all its cryptographic secrets from a user-provided seed and enables the derivation of new public keys based on the identities of users, without requiring any secret information.
Unlike other wallets, it achieves all these properties while being secure against privilege escalation.
We formalize the security model of hierarchical deterministic wallets and prove that an attacker compromising an arbitrary number of users within an Arcula wallet cannot escalate his privileges and compromise users higher in the access hierarchy.
Our design works out-of-the-box with any blockchain that enables the verification of signatures on arbitrary messages. We evaluate its usage in a real-world scenario on the Bitcoin Cash network.
\ifFULL
  \vspace{-.3cm}
  \paragraph*{Keywords.} \kws{}
\fi
\end{abstract}

    \begin{IEEEkeywords}
        \kws{}
    \end{IEEEkeywords}
\else
    \maketitle
    \thispagestyle{empty}
    
    \setcounter{page}{1}
\fi

\section{Introduction}\label{sec:Intro}
In recent years, the adoption of blockchain-based crypto-systems grew at an exponential rate.
At their core, these systems create a decentralized and democratic financial world where users exchange their assets without relying on any central authority and where transactions get delivered to the destination in a handful of seconds, by spending cents of a dollar in fees.
For these reasons, these systems attracted the interest of financial operators, banks, (decentralized) exchanges, and e-commerce marketplaces that aim to increase the security and the usability of their systems, reduce their business costs, and prepare for the financial market of the future.
With the adoption of these systems at scale, however, we face a new set of technological and financial hurdles that we have never experienced before:
(1) How to secure the digital assets of a commercial company? (2) How to handle the delegation and separation of responsibilities within the enterprise's chain of command? (3) And how to enable a third-party (\emph{e.g.}, an auditor) to evaluate past transactions of the company, and (4) to gather the list of its assets on the blockchain?

In this work, we aim at solving these challenges, and we focus on hierarchical deterministic wallets.
Similarly to how we keep our coins and our bills in a physical wallet, a blockchain wallet holds all our crypto-assets (\emph{e.g.}, our Bitcoins, Ethers, and other coins).
In particular, wallets hold the cryptographic keys that allow us to spend these coins.
Our goal is to design a wallet that is cryptographically secure and easy to use:
We specifically tackle the use cases of the legacy and future blockchain-based crypto-systems and, in turn, the needs of the large-scale enterprises that rely on them.
The wallet that we aim to design is hierarchical and deterministic.
This means that:
\begin{inlinelist}
\item The users can securely organize their keys in a hierarchy that reflects, for example, the subdivision in departments of a company.
 This way, the managers of a company can spend funds on behalf of their departments, but each department can only spend its own funds.
\item The wallet deterministically generates every cryptographic key by starting from an initial seed (\emph{e.g.}, a pseudorandom sequence derived from a mnemonic phrase).
 As a result, the users can reliably recover all their keys even in the case of total loss (\emph{e.g.}, after a hardware failure or a natural disaster).
\end{inlinelist}

To this end, we present Arcula, a hierarchical deterministic wallet named after the small casket where ancient Romans used to store their jewels.
We build Arcula on a deterministic variation of {\em hierarchical key assignment schemes (HKAs)}, a popular cryptographic primitive that has been studied for more than 20 years.
Arcula implements arbitrarily complex access hierarchies, allows for their dynamic modifications, and can incorporate temporal capabilities into the generation and storage of keys.
In addition, Arcula ties the identities of users to their public signing keys without requiring additional secret information --- bringing to fruition identity-based hierarchical cryptography (\emph{e.g.}, signatures) within the blockchain ecosystem.
Unlike other wallets, Arcula achieves these properties while being formally secure against privilege escalation.
It relies on simple cryptographic primitives and does not depend on any particular digital signature scheme.
As a consequence, Arcula is compatible with any blockchain that enables the verification of signatures on arbitrary messages.
%
%With Arcula, we present a groundbreaking hierarchical deterministic %wallet that will prove itself useful in the immediate time and also in %the future, to enable new applications relying on blockchains and %crypto-coins for which the existing solutions are inadequate.
We show its implementation in action in Bitcoin Cash, a fork of the original Bitcoin crypto-system.

The rest of this work is organized as follows.
In~\Cref{sec:hierarchical_deterministic_wallets} we describe the properties of hierarchical deterministic wallets and their applications.
\Cref{sec:related_works} and~\Cref{sec:preliminaries} respectively discuss the related work and introduce the notation and the cryptographic primitives that we use throughout the paper.
\Cref{sec:security_model_of_hierarchical_deterministic_wallet} formalizes the security model of hierarchical deterministic wallets.
\Cref{sec:arcula} and \Cref{sec:arcula_in_the_real_world} describe our design of Arcula and show it in action in the real world.
Lastly,~\Cref{sec:conclusions} concludes the paper.
\section{Hierarchical Deterministic Wallets}\label{sec:hierarchical_deterministic_wallets}
A hierarchical deterministic wallet (HDW) enables a user to securely generate and store the cryptographic keys associated with her coins.
Blockchain-based crypto-systems typically rely on digital signatures and pairs of secret and public signing keys:
Users spend their assets by signing a transaction with the secret key;
the others verify the authenticity of the signature through the public key.
Public keys can be derived from the corresponding secret keys, but not vice-versa.
On a high level, a hierarchical deterministic wallet holds a collection of secret signing keys.

An HDW deterministically generates its keys by starting from an initial seed provided by the user.
As long as the user remembers the seed, she will be able to recover her keys, even in case of wallet loss.
Besides, an HDW also organizes the keys under an access hierarchy, where each element represents a group of users and associated a pair of signing keys to them.
The privileges of users depend on their level in the hierarchy.
Users with higher privileges (\emph{i.e.}, higher in the hierarchy), must be able to derive the keys of users on lower levels and in turn, to sign messages (\emph{i.e.}, transactions) on their behalf.
Users on lower levels, however, should not be able to escalate their privileges along the hierarchy, not even when colluding with others.
Finally, an HDW should also provide an additional fundamental property:
It must be possible to deterministically generate every public key of the wallet by starting from a \emph{master public key} (depending on the user seed), without requiring any secret information in the process.
As we shall see, this requirement enables a set of creative use cases for HDW (\emph{e.g.}, public auditing of blockchain assets, and generation of new keys in untrusted environments).

\subsection{Properties}\label{sub:properties}
In more detail, let \sski{} be the secret signing key of a user \cvi{} of a hierarchy and let \ppki{} be the corresponding public key.
Let \sskj{} and \ppkj{} be, respectively, the secret and public keys associated with a descendant \cvj{} of \cvi{} (\emph{i.e.}, a user with lower privileges in the hierarchy).
A hierarchical deterministic wallet shall have the following properties:
\begin{property}[Security against privilege escalation]\label{prp:secret_keys}
 For any set of colluding descendants of \cvi{} it is computationally infeasible to recover the secret key \sski{} of \cvi{}.
\end{property}
\begin{property}[Deterministic secret derivation]\label{prp:determinism}
  For each descendant \cvj{} of \cvi{}, the secret keys \sskj{} is deterministically generated by using the secret information of \cvi{}.
  If \cvj{} has the highest privileges in the hierarchy, then her secret information are generated from a user-provided seed.
\end{property}
\begin{property}[Public-key derivation]\label{prp:public_keys}
  The public key \ppkj{} of each descendant \cvj{} of \cvi{} is deterministically generated only using public information;
  the generation process does not require the secret key \sski{} or any other secret information.
\end{property}

The three properties, together, define the ideal hierarchical deterministic wallet.
\Cref{prp:secret_keys} guarantees the security of the wallet:
Colluding users cannot escalate their privileges to recover the secret keys of others higher in the hierarchy.
\Cref{prp:determinism} ensures that all secrets are deterministically generated along the hierarchical path that links the most privileged users to their descendants, and so on.
\Cref{prp:public_keys} requires the public keys to be dynamically derivable without requiring any private information.
As we will see, in the past, this property proved to be particularly hard to achieve while also guaranteeing the security of the underlying wallets.
Nonetheless, it is crucially important as it enables the novel applications of HDW within the blockchain that we discuss in~\Cref{sub:applications}.
Finally, we note that every wallet in which the public derivation property holds, inevitably reveals the relation between the public keys of the wallet, making it impossible to achieve any privacy-related notion, \emph{e.g.}, unlinkability of transactions.
We discuss this issue in~\Cref{sub:transactions_unlinkability}.

% An ideal hierarchical deterministic wallet shall thus guarantee these three properties;
% in addition, the space required by its public parameters should be constant (\emph{i.e.}, should not depend on the size of the wallet), and it should allow the wallet to handle a dynamic (\emph{i.e.}, not predefined) number of keys.
% As we discuss in~\Cref{sub:arcula_at_a_glance}, Arcula achieves all these requirements, but the size of its public parameters grows linearly in the number of keys used to sign transactions.

\subsection{Applications}\label{sub:applications}
Hierarchical deterministic wallets enable different use cases, inspired by both well established and innovative financial applications that specifically tackle the needs of enterprises, governments, and financial institutions.
Individual users will also find HDW useful, leveraging their increased security and their deterministic reliability and, if they choose so, to achieve unlinkability of their transactions.

We discuss the benefits of HDW for individuals in~\Cref{sub:transactions_unlinkability}.
Here, instead, we focus on the applications of HDW at scale, that target (hundreds of) thousands of customers distributed across the world.

\ifCONF{}
 \begin{description}[style=unboxed,leftmargin=0cm]
\else
 \begin{description}
\fi
  \item[Enterprises:]
  In enterprises (\emph{e.g.}, financial institutions, or exchanges), the hierarchy of a deterministic wallet might reflect the underlying chain of command or the subdivision in regions, departments, and teams.
  It allows managers to distribute funds among different branches and ensure fiscal responsibility. In particular, each branch can manage its funds but are not allowed to spend those of other units.
  The deterministic generation of keys simplifies the management of secrets and guarantees a reliable recovery of the wallet, even in case of catastrophic loss.
  Cryptocurrency exchanges that manage the keys of hundreds of thousands of users might find this feature particularly useful:
  Through the HDW, they generate pairs of keys that take into account the hierarchy of users and then rely on the master seed to handle their recovery.
  Finally, the property of public-key derivation enables enterprises to comply with financial laws and regulations without jeopardizing the security of their infrastructure:
  \emph{E.g.} it allows a (possibly untrusted) auditor to inspect the funds that they hold by starting from the master public key and deriving all the public keys in the wallet without relying on any additional secret information.
  \item[e-commerce:]
  An HDW is distinctly beneficial to an e-commerce marketplace.
  Marketplaces, such as Amazon, typically advertise and sell products to buyers. They also allow third-party vendors to do the same.
Cryptocurrencies could help manage the payment flow to these vendors. When selling an item to a buyer, the marketplace generates a fresh payment address for each crypto-coin that it supports.
  As soon as the buyer transfers the required coins to one of the addresses and the blockchain confirms the transaction, the item gets shipped.
  The generation of fresh payment addresses leverages the properties of public-key derivation:
  Since it does not require any private information, it can take place in an untrusted environment (\emph{e.g.}, a web server exposed to the internet) and allows the e-commerce owner to derive the corresponding secret keys only when actually spending the funds (\emph{e.g.}, by deriving them offline starting from an intermediate key of the wallet).
  Even if an attacker compromises the webserver, he will not discover any secret keys, and thus funds received before the attack remain safe and intact. In addition, public-key derivation allows buyers or auditors to check the authenticity of the payment addresses since anyone can generate them from the public key of the marketplace.
 \item[Decentralized Finance (DeFi):]
  Decentralized Finance has recently started replacing many of the existing traditional financial tools (\emph{e.g.}, loans and futures) with open-source alternatives based on the blockchain and its smart contracts.
  With DeFi, HDW can unleash their full potential:
  The execution of smart contracts, indeed, cannot rely on any secret information as both the source code and the processing inputs are stored in the clear on the public blockchain.
  By combining HDW and public derivation, a DeFi smart contract can autonomously derive the public key of the recipient of a transaction (\emph{e.g.}, a user that will receive interests on a loan, or the employee of a company that will receive a percentage of its shares).
  The process could take place on the blockchain and does not require manual interactions.
  In turn, HDW and DeFi lay the foundations of a modern, democratic, and decentralized financial world.
\end{description}

\subsection{Threats and Security Model}\label{sub:threats_and_security_model}
We divide the set of possible threats to the security of hierarchical deterministic wallet in three security levels that we describe according to increasing requirements of trust.
\ifCONF{}
  \begin{description}[style=unboxed,leftmargin=0cm]
\else
  \begin{description}
\fi
  \item[Untrusted Environment: ] This level is entirely untrusted.
    It refers, for example, to the executing environment of a DeFi smart contract that relies on public key derivation (\Cref{prp:public_keys}) to generate fresh addresses to deliver payments.
  \item[Hot Environment: ]
    This level is semi-trusted.
    At this stage, the users can access their own secret keys and derive those of their descendants.
    An attacker that compromises a user of the hot environment will compromise, in turn, only her descendants in the hierarchy (\Cref{prp:secret_keys}).
  \item[Cold Storage: ]
    This is the most trusted level of the security model, holding the seed used to generate every key within the wallet deterministically.
    It typically corresponds to an offline location (\emph{e.g.}, a hardware token used to instantiate the wallet) that is physically secured (\emph{e.g.}, in a safe).
    An attacker that compromises the cold storage has full access to the wallet and to every asset that it holds.
\end{description}
In~\Cref{sub:arcula_at_a_glance,sec:related_works}, we analyze our construction and the related work under the spotlight of this security model.

\subsection{Arcula at First Glance}\label{sub:arcula_at_a_glance}

\begin{figure*}[tb]
  \centering
  \includegraphics[width=0.8\linewidth]{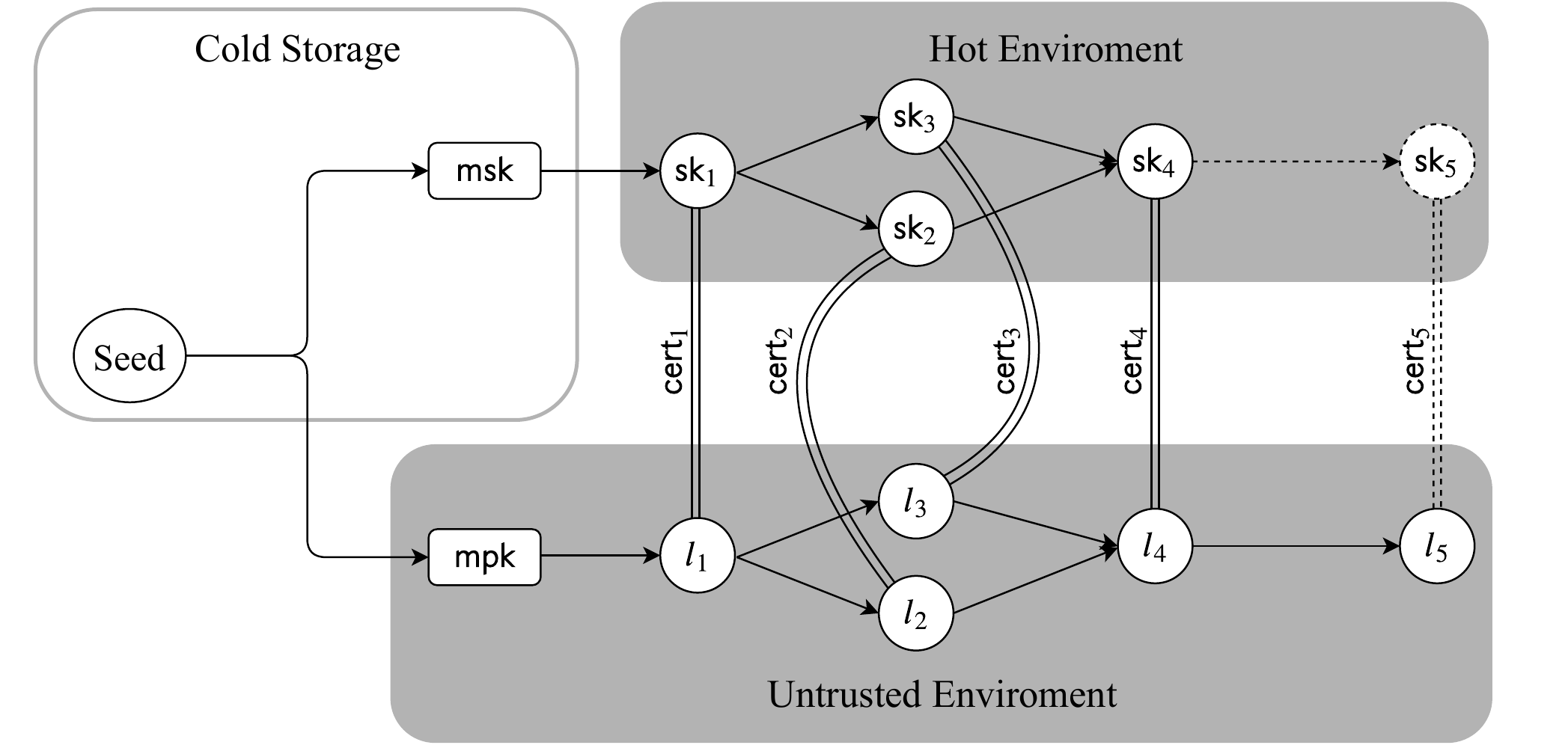}
  \caption{%
    A glance at the deterministic generation of secrets and identity-based public key derivation within Arcula.
    The users $l_1$ to $l_5$ at the bottom of the figure create a hierarchy, encoded as a directed acyclic graph.
    Arcula starts by deriving a pair of master keys $(\msk, \mpk)$ from the seed.
    The users are unequivocally identified by concatenating the master public key \mpk{} with their identities (\emph{e.g.}, $l_5$).
    The master secret key \msk{} allows the deterministic generation of secrets, so that every secret transitively depends on the initial seed.
    The hierarchical deterministic key assignment at the core of Arcula relies on the \msk{} and enables users with higher privileges to derive the secrets of their descendants (\emph{e.g.}, enabling $l_1$ to start from her secret information and derive the secret key $\sk{}_4$ of $l_4$).
    Finally, Arcula associates secret keys to their corresponding identities by providing each user with a certificate signed by the \msk{}:
    $\cert_{4}$, \emph{e.g.}, associates $l_4$ with the secret key $\sk_{4}$.
    This approach enables Arcula to flexibly add new users to the hierarchy without requiring secret information.
    The user $l_5$ was dynamically added to the hierarchy, and her public key was obtained through identity-based derivation, but she still does not have any secret information associated with her.
    This means that she can receive funds (\emph{e.g.}, on the blockchain) and that the certificate and her secret information will only be required when she will spend them.
  }\label{fig:resources/arcula_glance}
\end{figure*}

\Cref{fig:resources/arcula_glance} provides an overview of the design of Arcula, our hierarchical deterministic wallet.
Arcula starts from the seed to deterministically generate a master pair of secret and public keys $(\msk, \mpk)$.
After the initial instantiation, the seed and the master secret key can be safely moved to cold storage.
The master public key uniquely identifies the wallet and will be used in the public key derivation process.
In more detail, Arcula generates the public key of $i$-th user of the hierarchy by concatenating the master public key $\mpk$ and her identity $\cli$ (\emph{e.g.}, a numerical index or a bit string).
As a result, public-key derivation in Arcula can be executed, by design, in any untrusted environment.
The master secret key $\msk$ allows, instead, to deterministically generate the secret keys $\sski$ corresponding to the users of the hierarchy.
In particular, Arcula generates the secret keys by relying on deterministic hierarchical key assignment:
A provably secure cryptographic scheme through which we assign a derivation key to every user of the hierarchy, that, in turn, they will use to derive their own secret key and those of their descendants.
The private key generation should be executed in the context of the hot environment, as compromising a single user will lead to compromising every descendant.
Finally, Arcula explicitly associates secret keys to their corresponding identity-based public keys through a certificate $\cert_i$, signed by the master secret key $\msk$, that links the public identity $\cli$ (top of~\Cref{fig:resources/arcula_glance}) to the secret key \sski{} generated by the deterministic key assignment scheme (bottom of~\Cref{fig:resources/arcula_glance}).

The identity-based approach that we realized with Arcula provides several advantages:
First, it solves the problem of distributing a public key for each user of the hierarchy and of associating, off the blockchain, to the identity of an individual.
In addition, it allows for generating an unbounded number of addresses for receiving transactions.
On the other hand, Arcula relies on certificates signed by the master secret key to associate the secret key of a user with their identity, to which the transaction was addressed.
The users only require these certificates when they sign a transaction for the first time:
This means that their creation can be delayed until that moment and that it can happen entirely offline.
% Nonetheless, it also means that every new certificate will be stored in the public parameters of the wallet and, in practice, on the blockchain (their size is typically as small as a single group element in signature schemes based on Elliptic Curves, \emph{e.g.}, ECDSA).

\section{Related Work}\label{sec:related_works}

\ifCONF
\begin{table*}[tb]
\else
\begin{table}[tb]
\small
\fi
  \centering
  \caption{Comparison between Arcula and the existing state-of-the-art solutions.}\label{tab:state_of_the_art}
  \begin{tabular}{c c c c c}
    \toprule
\ifCONF
  & \begin{tabular}{c} Security to Privilege Escalation \\ (\Cref{prp:secret_keys}) \end{tabular} &
  \begin{tabular}{c} Public Key Derivation \\ (\Cref{prp:public_keys}) \end{tabular} &
  \begin{tabular}{c} Deterministic Generation \\ (\Cref{prp:determinism}) \end{tabular} &
\else
  & \begin{tabular}{c} Security to \\ Privilege Escalation \\ (\Cref{prp:secret_keys}) \end{tabular} &
  \begin{tabular}{c} Public Key \\ Derivation \\ (\Cref{prp:public_keys}) \end{tabular} &
  \begin{tabular}{c} Deterministic \\ Generation \\ (\Cref{prp:determinism}) \end{tabular} &
\fi
  Hierarchy \\
  \midrule
  BIP32~\cite{wuille2012bip32} & No & Yes & Yes & Tree \\
  Hardened BIP32~\cite{wuille2012bip32} & Yes & No & Yes & Tree \\
  Gutoski and Stebila~\cite{gutoski2015} & No & Bounded & Yes & Tree \\
  Fan \emph{et al.}~\cite{dblp:conf/desec/fantshk18} & No & Yes & Yes & Tree \\
  \midrule
  Poulami \emph{et al.}~\cite{das2019formal} & $-$ & $-$ & $-$ & No \\
  Goldfeder \emph{et al.}~\cite{goldfeder2015securing} & $-$ & $-$ & $-$ & No \\
  Gennaro \emph{et al.}~\cite{gennaro2016} & $-$ & $-$ & $-$ & No \\
  Dikshit and Singh~\cite{dikshit2017} & $-$ & $-$ & $-$ & No \\
  \midrule
  Arcula (\Cref{sec:arcula}) & Yes & Yes & Bounded & DAG \\
  \bottomrule
  \end{tabular}
\ifCONF
\end{table*}
\else
\end{table}
\fi

Bitcoin Improvement Proposal 32 (BIP32) defines the state of the art implementation of hierarchical deterministic wallets~\cite{wuille2012bip32}.
In short, let \cGroupg{} be the generator point of an Elliptic Curve.
A private key \sski{} is associated with its public key \ckipub{}.
Let \hash{} be a hash function; the descendants' private keys \sskj{} is defined as:
\begin{equation}\label{eq:secret_key}
  \sskj{} = \hash(\ppki{} \concat j) + \sski{}
\end{equation}
The corresponding public keys \ppkj{}, instead:
\ifFULL{}
\begin{align}
  \ppkj{} &= \cGroupPub{\sskj{}} \nonumber \\
  \ppkj{} &= \cGroupPub{\hash(\ppki{} \concat j) + \sski{}} \nonumber \\
  \ppkj{} &= \cGroupPub{\hash(\ppki{} \concat j)} \cdot \cGroupPub{\sski{}} \nonumber \\
  \ppkj{} &= \cGroupPub{\hash(\ppki{} \concat j)} \cdot \ppki{}\label{eq:public_key}
\end{align}
\else
\begin{equation}
  \ppkj{} =
  \cGroupPub{\sskj{}} =
  \cGroupPub{\hash(\ppki{} \concat j) + \sski{}} =
  % \cGroupPub{\hash(\ppki{} \concat j)} \cdot \cGroupPub{\sski{}} =
  \cGroupPub{\hash(\ppki{} \concat j)} \cdot \ppki{}\label{eq:public_key}
\end{equation}
\fi
\Cref{eq:secret_key,eq:public_key} satisfy the properties of deterministic generation and public derivation (\Cref{prp:determinism,prp:public_keys}).
However,~\Cref{eq:secret_key} creates a privilege escalation vulnerability where the knowledge of a descendant private key \sskj{} and the parent public key \ppki{} allows recovering the parent private key \sski{}:
\begin{equation*}
  \sski{} = \sskj{} - \hash(\ppki{} \concat j) \mod \cGroupq
\end{equation*}
This privilege escalation vulnerability has been discussed extensively~\cite{wuille2012bip32,vitalik:2013,courtois2014privatekr,gutoski2015}.
In the context of our threat model, there is no distinction between the cold storage and the hot environment, since compromising any node leads to compromising the entire wallet.

BIP32 addresses this issue by designing a \emph{hardened} key derivation method that generates a descendant private key $\sk^{h}_{j}$ as follows:
\begin{equation}
  \sk^{h}_{j} = \hash(\sski{} \concat i) + \sski{} \mod \cGroupq{}\label{eq:secret_key_h}
\end{equation}
The hardened derivation solves the privilege escalation vulnerability but looses the public key derivation (\emph{i.e.}, trades~\Cref{prp:public_keys} for~\Cref{prp:secret_keys}).
Generating a hardened public key $\pk^{h}_{j}$ now requires the parent secret key \sski{}:
\begin{equation*}
  \pk^{h}_{j} = \cGroupPub{\sk^{h}_{j}} = \cGroupPub{\hash(\sski \concat j) + \sski{}} = \cGroupPub{\hash(\sski{} \concat j)} \cdot \ppki{}
\end{equation*}

In~\Cref{tab:state_of_the_art} we compare Arcula, our hierarchical deterministic wallet, with (hardened) BIP32 and the related works.
BIP32 does not satisfy the security to privilege escalation property.
The hardened version of BIP32, instead, fails in deriving public keys without requiring additional secrets.
Gutoski and Stebila~\cite{gutoski2015} propose an HDW strengthens the security of BIP32.
Their design splits each secret key into $n$ shares, distributed to the descendants of the user;
reconstructing the secret key requires at least $m$ shares.
This solution provides weaker security than~\Cref{prp:secret_keys}, because $m$ colluding descendants of a user can recover the original secret key (as opposed to preventing any set of colluding descendants from escalating their privileges).
In addition, they support~\Cref{prp:public_keys} by publishing the public keys of all the users in the wallet.
They do not allow the generation of fresh public keys, and their derivation is bounded to the number of published keys.
Fan \emph{et al.}~\cite{dblp:conf/desec/fantshk18} develop an HDW based on Schnorr signatures and trapdoor hash functions that enables the users to sign new transactions without accessing their private keys.
A generic user can sign transactions on behalf of its descendants only after authorization by the root of the hierarchy that needs to reveal her the master private trapdoor key.
As a result, any authorized user is able not only to sign new transactions on behalf of its descendants but also of all the users of the hierarchy.
Compromising a single authorized user leads to revealing every secret stored in the wallet --- the cold storage and hot environment of our threat model overlap, and the scheme is not secure against privilege escalation.
Poulami \emph{et al.}~\cite{das2019formal} provide a formal definition of non-hierarchical deterministic wallets and show a set of modifications that make ECDSA-based deterministic wallets provably secure.
Goldfeder \emph{et al.}~\cite{goldfeder2015securing} and subsequently Gennaro \emph{et al.}~\cite{gennaro2016} propose a non-hierarchical deterministic wallet where the secret key is shared among $n$ parties, and at least $t$ of them are required to sign a transaction.
Dikshit and Singh~\cite{dikshit2017} extend the threshold-based ECDSA signatures to assign different weights to the participants of the protocol.
These works deal with non-hierarchical deterministic wallets, and they do not aim at achieving~\Cref{prp:secret_keys,prp:determinism,prp:public_keys} (depending on the hierarchical structure of the wallet).

Arcula, on the other hand, is the only solution secure against privilege escalation that, at the same time, enables identity-based unbound public key derivation.
Spending the coins addressed to one of its users, however, requires the creation of a certificate that associates their identities to their keys.
For this reason,~\Cref{prp:determinism} and, more precisely, spending coins within Arcula, is bound to the generation of a certificate, signed by the master secret key, that authorizes the users.
The deterministic generation of secret keys, instead, relies on an unbound hierarchical key assignment.
This approach provides some significant advantages and enables a set of novel use cases:
Users (\emph{e.g.}, enterprises, or smart contracts in decentralized finance) rely on the identity-based derivation to generate in untrusted environments fresh public keys and addresses on which they will securely receive coins.
The certificates will only be required before spending these coins, and their generation can happen entirely offline (\emph{e.g.}, in cold storage).

Finally, Arcula differentiates itself from the other solutions by supporting a complex access hierarchy --- \emph{e.g.}, a directed acyclic graph (DAG) --- instead of a tree.
Many of our novel use cases require more complex hierarchies.
Consider, for instance, two or more departments of a company that collaborate on a project.
They share a common budget and need to spend from it without revealing their own secret key.
The access hierarchy encoding this context would derive a single node, the budget, as a successor of multiple others (the departments).
\Cref{eq:secret_key,eq:public_key,eq:secret_key_h} cannot handle hierarchies where a node has more than one predecessor---constraining BIP32 only to implement access hierarchy that forms a tree.
The hierarchical key assignment scheme at the core of Arcula, instead, supports complex hierarchies and, in addition, it allows us to dynamically modify the hierarchy (\emph{e.g.}, by deleting an intermediate node) and to incorporate temporal capabilities (\emph{e.g.}, key expiration) into the wallet.

\section{Preliminaries}\label{sec:preliminaries}

\subsection{Notation}\label{sub:notation}
We use the notation $[n] = \{1,\ldots,n\}$.
Uppercase boldface letters (such as $\rv{X}$) are used to denote random variables, lowercase letters (such as $x$) to denote concrete values, calligraphic letters (such as $\cXX$) to denote sets, and sans serif letters (such as \Adv{}) to denote algorithms.
Algorithms are modeled as (possibly interactive) Turing machines;
if algorithm \Adv{} has access to some oracle \Oracle{}, we often write $\cQQ_{\Oracle{}}$ for the set of queries asked by \Adv{} to \Oracle{}.

For a string $x\in\bin^*$, we let $|x|$ be its length;
$|\cXX|$ represents the cardinality of the set $\cXX$.
When $x$ is chosen randomly in $\cXX$, we write $x \sample \cXX$.
We write $y = \Adv(x)$ to denote a run of the algorithm \Adv{} on input $x$ and output $y$;
if \Adv{} is randomized, $y$ is a random variable and $\Adv{}(x;r)$ denotes a run of \Adv{} on input $x$ and (uniform) randomness $r$.
We write $y \sample \Adv(x)$ to denote a run of the randomized algorithm \Adv{} over the input $x$ and uniform randomness.
An algorithm \Adv{} is \emph{probabilistic polynomial-time} (PPT) if \Adv{} is randomized and for any input $x,r\in\bin^*$ the computation of $\Adv(x;r)$ terminates in a polynomial number of steps (in the input size).

Throughout the paper, we denote by $\secpar\in\NN$ the security parameter and we implicitly assume that every algorithm takes as input the security parameter. A function $\nu:\NN\rightarrow[0,1]$ is called \emph{negligible} in the security parameter $\secpar$ if it vanishes faster than the inverse of any polynomial in $\secpar$, \emph{i.e.}, $\nu(\secpar)\in \bigO{1/p(\secpar)}$ for all positive polynomials $p(\secpar)$.
We write $\negl$ to denote an unspecified negligible function in the security parameter.

\subsection{Signature Scheme}\label{subsec:signature-scheme}
A signature scheme with message space $\cMM$ is made of the following polynomial-time algorithms.
\begin{description}
  \item[$\Keygen(\secparam)$:] The randomized key generation algorithm takes the security parameter and outputs a secret and a public key $(\sk, \pk)$.

  \item[$\Sign(\sk, m)$:] The randomized signing algorithm takes as input the secret key $\sk$ and a message $m\in\cMM$, and produces a signature $\sigma$.

  \item[$\Verify(\pk, m,\sigma)$:] The deterministic verification algorithm takes as input the public key $\pk$, a message $m$, and a signature $\sigma$, and it returns a decision bit.
\end{description}

A signature scheme is correct if honestly generated signatures always verify correctly.
\begin{definition}[Correctness of signatures]\label{def:Sigcorrectness}
    A signature scheme $\Pi= (\Keygen, \Sign, \allowbreak{} \Verify)$ with message space $\cMM$ is correct if $\forall\secpar\in\NN$ and $\forall m \in \cMM$, the following holds:
    \[
    \prob{\Verify(\pk, m,\Sign(\sk, m))}=1,
    \]
    where $(\sk, \pk)\sample \Keygen(\allowbreak\secparam)$.
\end{definition}

For security we are interested in existential unforgeability, \emph{i.e,} it must be infeasible to forge a valid signature on a new fresh message.

\begin{definition}[Unforgeability of signatures] \label{def:SigUnforgeability}
	A signature scheme $\Pi= (\Keygen, \allowbreak \Sign, \Verify)$ is existentially unforgeable under chosen-message attacks if for all PPT adversaries $\Adv{}$:
	\[
	\prob{\GameSigForge = 1} \leq \negl,
	\]
	where $\GameSigForge$ is the following experiment:
	\begin{description}
		\item[Setup:] The challenger runs $(\sk, \pk) \sample\Keygen(\secparam)$ and gives $\pk$ to \Adv{}.
		\item[Query:] The adversary has access to a signing oracle $\SignOracle{}(\cdot)$. On input $m$, the challenger computes and returns $\sigma \sample \Sign(\sk, m)$. Let $\cQQ_{\Sign}$ denote the the messages queried to the signing oracle.
		\item[Forgery:] The adversary outputs $(m, \sigma)$. If $m \not\in \cQQ_\Sign$, and $\Verify(\pk, \allowbreak m, \sigma) = 1$, output $1$, else output $0$.
	\end{description}
\end{definition}

\section{Security Model of Hierarchical Deterministic Wallet}\label{sec:security_model_of_hierarchical_deterministic_wallet}
One of the contributions of this work is to formalize, for the first time, the security model of hierarchical deterministic wallets (HDW).
We define a hierarchical deterministic wallet over $5$ algorithms \SetDerivePPSignVerify{}, where:
\begin{inlinelist}
\item $\Set{}$ deterministically instantiates the wallet by generating the public parameters \pp{} and a set of the derivation key \derkeyi{}, one for each node $\cvi\in\cV$ of the hierarchy.
\item $\DerivePriv{}$ and $\DerivePub$ are responsible of the derivation of signing and public keys (\Cref{prp:determinism,prp:public_keys}).
  $\DerivePriv{}$ derives the signing key \sskj{} of a node \cvj{}, descendent of \cvi{}, by using the derivation key \derkeyi{} associated to \cvi{};
  $\DerivePub{}$ derives the corresponding public key \ppkj{} by using the only the public parameters \pp{}.
\item $\Sign$ and $\Verify$ take inspiration from the standard signing and verification algorithms of a digital signature scheme.
\end{inlinelist}

Concretely, let \sG{} be a directed acyclic graph (DAG) representing an access hierarchy.
We define the set of descendants \ddesc{} of node \cvj{} to be the set of nodes \cvj{} such that there exists a direct path $w$ from \cvi{} to \cvj{} in \cG{}.
A hierarchical deterministic wallet $\Pi=\SetDerivePPSignVerify{}$, defined over a seed space \cSS{} and message space \cMM{} is defined in the following way:
\begin{description}
  \item[\sdSet{}:] The deterministic setup algorithm takes as input a security parameter, an access graph \sG{}, and an initial seed $\cSeed \in \cSS{}$, and outputs the public parameters \pp{} and a set of derivation keys $\{\derkeyi{}\}_{\cvi \in \cV}$.
  \item[\swDerivePub{}:] The deterministic public derivation algorithm takes as input the public parameters \pp{}, a target node \cvi{}, and outputs the public key \ppki{} associated to node \cvi{}.
  \item[\swDerivePriv{}:] The deterministic private derivation algorithm takes as input the public parameters \pp{}, the derivation key \derkeyi{} of node \cvi{}, and a target node $\cvj{} \in \descvi$, and outputs the secret key \sskj{} associated to node \cvj{}.
  \item[\sSign{}:] The randomized signing algorithm takes as input a message $m \in \cMM$, and a secret key \sski{}, and outputs a signature $\sigma$.
  \item[\sVerify{}:] The deterministic verification algorithm takes as input a public key \ppki{}, a message $m$, and a signature $\sigma$, and outputs a decisional bit $b$.
\end{description}

A hierarchical deterministic wallet is correct if any user can derive the private and public key of its descendants and create a valid signature on behalf of them.
This means that any node \cvi{} can derive the signing key \sskj{} of any node $\cvj{} \in \desc{}(\cvi)$ and produce, in turn, a valid signature $\sigma$ on behalf of \cvj{} (\emph{i.e.}, that passes the verification process against the public key \ppkj{} obtained through public key derivation).
\begin{definition}[Correctness of HDW]\label{def:HDW-corr}
A hierarchical deterministic wallet $\Pi = \SetDerivePPSignVerify{}$, with seed space \cSS{} and message space \cMM{}, is correct if for every DAG $\cG = (\cV, \cE)$, $\forall \cvi,\cvj \in V, \, \forall \cvj \in \desc{}(\cvi), \forall \cSeed \in \cSS, \forall m \in \cMM$ the following conditions holds:
\begin{equation*}
  \prob{\sVerifyJ{}= 1} \geq 1 - \negl{},
\end{equation*}
where $\ppssppkk = \sdSet$, $\sskj = \swDerivePriv$, and $\ppkj = \DerivePub(\pp{},\allowbreak \cvj{})$.
\end{definition}

The security of a hierarchical deterministic wallet draws inspiration from existentially unforgeable signatures.
We allow an attacker to corrupt an arbitrary number of users in the hierarchy---by corrupting a user; the attacker implicitly corrupts also all her descendants.
In addition, the attacker also has access to a signing oracle, that returns signatures on arbitrary messages from any uncorrupted node.
We challenge the attacker to forge a signature for a new message on behalf of an uncorrupted node.
\begin{definition}[Hierarchical existential unforgeability of HDW]\label{def:HDH-euf}
  A hierarchical deterministic wallet is hierarchically existentially unforgeable under chosen-message attacks if for every DAG $\cG=(\cV, \cE)$ and PPT adversary \Adv{} the following condition holds:
  \begin{equation*}
   \prob{\GameHSigForge = 1} \leq \negl,
  \end{equation*}
  where experiment \GameHSigForge{} is defined in the following way:
  \begin{description}
    \item[Setup:] The challenger samples a random $\cSeed \sample \cSS$ and executes $\ppssppkk = \sdSet$.
    It gives the public parameters \pp{} to \Adv{}.
    \item[Query:] The adversary \Adv{} has access to the following oracles:
      \begin{description}
        \item[$\CorrOracle(\cdot)$:] On input $\cvi{} \in \cV$, the challenger answers by giving \derkeyi{} to \Adv{}.
          Let $\cQQ_{\corr}$ denote the set of nodes \cvi{} that \Adv{} corrupted, including their descendants \descvi{}.
        \item[$\SignOracle(\cdot, \cdot)$:] On input $(m, \cvi{}) \in \cMM \times \cV$, the challenger returns \ssigma{} where \ssDerivePrivArculaZero{}.
          Let $\cQQ{}_\Sign$ denote the pairs $(m, \cvi{})$ for which \Adv{} queried the oracle $\SignOracle$.
      \end{description}
    \item[Forgery:] \Adv{} outputs a forgery $(\cvi{}, m, \sigma)$.
     If $\Verify_{\ppki}(m, \sigma) = 1$ where $\ppki{} = \swDerivePub{}$ and $\cvi \notin \cQQ_\corr$, $(m, \cvi{}) \notin \cQQ_\Sign$, return $1$; otherwise return $0$.
  \end{description}
\end{definition}

\section{Arcula: A Secure Hierarchical Deterministic Wallet}\label{sec:arcula}
Arcula, the design that we present in this paper, satisfies the properties of HDW formalized by~\Cref{def:HDW-corr,def:HDH-euf}.
It is provably secure against key recovery;
it deterministically derives the private information from an initial seed;
and it enables identity-based public key derivation.
It achieves such result by relying on two fundamental intuitions:
Securely and deterministically generating a set of keys for the users of a hierarchy and then explicitly associating these keys with their identities.
By doing so, Arcula provides a groundbreaking design that brings identity-based cryptography to the blockchain, tackles explicitly novel use cases and applications, and stands on more than $20$ years of research in how to securely distribute keys to the users in a hierarchy.

At its core, indeed, Arcula derives the keys of the users by relying on a deterministic version of hierarchical key assignment schemes (HKA):
A provably secure process that, precisely as it happens in an HDW, takes as input a hierarchy of users and assigns a secret key to every user, so that users with higher privileges can derive the keys of those with fewer privileges.
To the best of our knowledge, hierarchical key assignment schemes have never been leveraged before implementing an HDW\@.
As a result, one of the contributions of this work is to bind together, for the first time, these seemingly unrelated fields of research.
In addition, hierarchical key assignment schemes provide several advantages:
They're highly efficient and have been extensively studied in the past;
they enable Arcula to implement arbitrarily complex hierarchies, to integrate temporal capabilities into the wallet, and to support the dynamic addition or removal of users to the hierarchy.

The following sections first provide a brief background on (deterministic) hierarchical key assignment schemes and then describe, in detail, our construction of Arcula.

\subsection{Deterministic Hierarchical Key Assignment}\label{sub:deterministic_hierarchical_key_assignment}
A hierarchical key assignment scheme~\cite{atallah:2009:dek:1455526.1455531} assigns a set of cryptographic keys to a set of users in a hierarchy.
The hierarchy, encoded as a directed acyclic graph, represents the access rights of users:
A path from a node \cvi{} to a node \cvj{} implies that the user \cvi{} has higher privileges than \cvj{} and can assume the same access rights of \cvj{}.
An efficient hierarchical key assignment scheme (HKA) enforces the access hierarchy while minimizing the number of keys distributed to the users.

Typically, HKA schemes sample the cryptographic secrets that they assign at random.
Our goal, however, is to leverage an HKA at the core of our wallet, where each secret is deterministically derived from a seed provided by the user.
To do so, we propose a deterministic modification of the HKA developed by Atallah \emph{et al.}~\cite{atallah:2009:dek:1455526.1455531} that is secure under key indistinguishability (and as a consequence guarantees~\Cref{prp:secret_keys} by design).

A Deterministic Hierarchical Key Assignment (DHKA) scheme with seed space $\cSS$ is composed of the following polynomial-time algorithms:
\begin{description}
  \item[\sdSet{}:] The deterministic setup algorithm takes as input the security parameter, a DAG \sG{}, and an initial seed \sSeed{}, and outputs two mappings:
    \begin{inlinelist}
    \item a public mapping \sPub{}, associating a public label \cli{} to each node \cvi{} in \cG{} and a public information \cyij{} to each edge \sE{};
    \item a secret mapping \sSec{}, associating a secret information \cSi{} and a cryptographic key \cxi{} to each node \cvi{} in \cG{}. (No secret information is associated to the edges).
    \end{inlinelist}
  \item[\sDerive{}:]\tDerive{}
\end{description}

Informally, the correctness of a DHKA scheme requires that every user must be able to derive, correctly, the secret key of any other user lower in the hierarchy.
Its security definition, instead, requires that even if an attacker corrupts an arbitrary number of descendants of a node, he cannot distinguish its secret key from a uniformly random string.
We formalize the security model and the construction of our DHKA in~\Cref{sec:deterministic_wallets_through_dynamic_and_efficient_key_management}.

\subsection{Constructing Arcula from DHKA and signatures}\label{sub:arcula_implementation}

Arcula, our implementation of a hierarchical deterministic wallet, relies on deterministic hierarchical key assignment schemes (DHKA) and digital signatures.
This section details our construction, that we describe through the \SetDerivePPSignVerify{} algorithms that define a hierarchical deterministic wallet.
The \Set{} algorithm instantiates the scheme, starting from the initial seed to deterministically generate a pair of \emph{master public and secret} keys \mspk{}.
The master keys respectively serve two purposes:
To identify the users of the wallet and to assign a pair of \emph{signing} keys \ssppkifirst{} to each of them.
We generate the signing keys through the key indistinguishable DHKA scheme of~\Cref{sub:deterministic_hierarchical_key_assignment}, so that users can sign transactions on behalf of their descendants while guaranteeing the security of their keys against privilege escalation.
In more detail, the DHKA assigns to each user \cvi{} a derivation key \derkeyi{} that allows them to generate their own signing keys \skXi{} and to derivate those of their descendants.
We set the master secret key to the derivation key of the user \cvzero{} with highest privileges in the hierarchy\footnote{If there exist multiple users with maximum privileges it is always possible to modify the structure of the DAG and add a \emph{minimal} node that does not correspond to any real user, does not change the hierarchical ordering of the others, and has the highest privileges. We describe the modification in detail in~\Cref{sec:deterministic_wallets_through_dynamic_and_efficient_key_management}.}, \emph{i.e.}, $\msk = \derkey_0$.
The master public key \mpk{}, on the other hand, unequivocally identifies the wallet.
We set it to the public signing key of \cvzero{} ($\mpk = \pkX_0$), and we combine it with the identifiers of users to achieve public key derivation.
In particular, we identify a user \cvi{} of the wallet by concatenating the master public key \mpk{} with the public label \cli{} associated with her.
Finally, the \Set{} algorithm binds the identifies of users to their signing key pair \ssppkifirst{} through a certificate, signed by the master secret key \msk{}\footnote{The master secret key \msk{} deterministically generates the (master) signing key $\skX_0{}$, that we use to sign the certificates.}, that explicitly authorizes the signing key \skXi{} to spend the coins destined to \cvi{}.
The \DerivePriv{} and \DerivePub{} algorithms respectively derive the signing keys (through the DHKA scheme) and the corresponding public keys (by combining the master public key \mpk{} and the node identifiers).
Finally, the \Sign{} and the \Verify{} algorithms handle the creation and verification of digital signatures.
Any node \cvi{} runs the \Sign{} algorithm with its signing key \skXi{} to create a signature and the \Verify{} algorithm checks that there exists a certificate authorizing \skXi{} to spend funds on behalf of the node \cvi{} (identified by the label \cli{}) under the master secret key \msk{}.

On the blockchain, the master public key \mpk{} and the public label \cli{} of a user form her address for receiving payments.
The users spend funds by signing transactions through their signing key \skXi{} and by presenting, at the same time, the certificate that associates their keys to their identity and that authorizes them to spend funds.
This approach provides several advantages.
Receiving funds does not require the certificate---the identity-based public key derivation allows us to generate the address of any destination node without any private information.
Users only require their certificate when spending funds for the first time, \emph{i.e.}, when signing a transaction through their secret key \skXi{}, and the creation of the certificate can happen entirely offline (\emph{i.e.}, in cold storage).
In addition, the signing keys \ssppkifirst{} can also provide users with unlinkability of their transactions.
As we describe in~\Cref{sub:transactions_unlinkability}, users that do not need the identity-based public key derivation can directly leverage the signing public key \pkXi{} as a pseudonym address on which to receive funds and then spend them through the corresponding private key \skXi{}.

\begin{construction}\label{const:cert-arcula}
Let \sDHKA{} and \dECDSA{} be respectively a DHKA and a signatures signature scheme.
We build Arcula in the following way:
\begin{description}
 \item[\sdSet:] On input the security parameter, a DAG $\cG{} = (\cV, \cE)$, and a seed $\cSeed \in \cSS$ the algorithm proceeds as follows:
 \begin{enumerate}
   \item Compute $\PubSec{} = \sdSetDHKA{}$.
   \item For each node $\cvi \in \cV$:\label{enum:w_set_proof}
     \begin{enumerate}
       \item Let $\csxi{} = \aSec{}$ and set $\derkeyi{} = \cSi{}$.\label{enum:w_set_proof_skip}
       \item $(\skX_i, \pkX_i) = \Keygen_{\Sig}(\secparam; \cxi{})$.\label{enum:w_set_proof_inner}
     \end{enumerate}
   \item Output $\pp{} = (\cG{}, \Pub{}, \crtset{}, \pkX_0)$ and ${\{\derkeyi\}}_{\cvi\in\cV}$ where $\hatsigmai \sample \Sign_{\Sig}(\skX_0, (\pkX_i, \cli{}))$ for $\cvi \in \cV$, and $\cli = \Pub(\cvi)$.\label{enum:sign_oracle_proof_inner}
 \end{enumerate}

 \item[$\DerivePub(\pp{}, \cvj)$:] On input the public parameters $\pp{} = (\cG{}, \Pub{}, \crtset{}, \pkX_0)$ and a node $\cvj{} \in \cV$, the algorithm returns $\pk_j= (\pkX_0, \clj{})$ where $\clj = \Pub(\cvj)$.

 \item[$\DerivePriv(\pp{}, \derkeyi, \cvi, \cvj)$:] On input the public parameters $\pp{} = (\cG{}, \Pub{}, \crtset{}, \pkX_0)$, the derivation key $\derkeyi{} = \cSi{}$, and two nodes $\cvi{}, \cvj{}\in \cV$ such that $\cvj{} \in \descvi$, the algorithm runs $\cxj = \DeriveDHKA(\cG{}, \Pub{}, \cvi{}, \cvj{}, \cSi{})$ and $(\skX_j, \pkX_j) = \Keygen_{\Sig}(\secparam; \cxj{})$.
 Finally, it returns $\sk_j = (\skX_j, \pkX_j, \hatsigma_j)$.

 \item[$\Sign(\sk_{i}, m)$:] On input a signing key $\sk_i = (\skX_i, \pkX_i, \hatsigma_i)$ and a message $m$, the algorithms returns $\sigma = (\pkX_i, \sigma', \hatsigmai)$ where $\sigma' \sample \Sign_{\Sig}(\skX_i, m)$.

 \item[$\Verify(\mathsf{pk}_{i}, m, \sigma{})$:] On input a public key $\pk_i=(\pkX_0,\cli)$, a message $m$, and a signature $\sigma{} = (\pkX_i, \sigma', \hatsigmai)$, the algorithms returns $1$ if $\Verify_{\Sig}(\pkX_0, (\pkX_i, \cli{}), \hatsigmai) = 1$ and $\Verify_{\Sig}(\pkX_i, m, \sigma') = 1$; otherwise it returns $0$.
\end{description}
\end{construction}

\begin{remark}
  Arcula shares a significant number of similarities with identity-based hierarchical signature schemes~\cite{10.1007/3-540-36178-2_34}.
  Its design, indeed, associates a pair of signing keys to the identity of a user and allows her to sign messages on behalf of her descendants.
  We point out, however, some fundamental differences.
  Most hierarchical identity-based signature schemes leverage a conspicuous number of public parameters and rely on bilinear mappings.
  This makes them unpractical to use in the existing blockchains:
  The underlying protocols should efficiently handle the bilinear mappings, and the public parameters that define the instantiation of the scheme should be stored on the blockchain itself.
  Our design of Arcula, on the other hand, explicitly takes into consideration the characteristics and the limitations of blockchain systems:
  We do not rely on bilinear mappings, and we only store a small portion of the public parameters \pp{} (one certificate per transaction, typically a single group element) on the blockchain.
\end{remark}

The correctness of the scheme comes directly from the correctness of the underlying primitives.
As for security, we establish the following result whose proof appears in~\Cref{subsec:proof_arcula}.
\begin{theorem}\label{thm:arcula_euf}
  Let \sDHKA{} and $\Sig = (\kgen_\Sig, \Sign_\Sig, \Verify_\Sig)$ be respectively a deterministic hierarchical key assignment and a signature scheme.
  If \DHKA{} is key indistinguishable (\Cref{def:sec_dka}) and $\Sig$ is existentially unforgeable (\Cref{def:SigUnforgeability}), then the HDW $\Pi$ from \Cref{const:cert-arcula} is hierarchically existentially unforgeable (\Cref{def:HDH-euf}).
\end{theorem}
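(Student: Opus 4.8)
The plan is to reduce hierarchical unforgeability to the existential unforgeability of $\Sig$ and the key indistinguishability of $\DHKA$ by dissecting the shape of a winning forgery, after a preliminary hybrid that severs the master signing key from the $\DHKA$. I would first record a structural fact: the root $\cvzero$ is an ancestor of every node, so a corruption query on $\cvzero$ would drop all of $\cV$ into $\cQQ_\corr$ and leave no admissible target; hence a winning $\Adv$ never corrupts $\cvzero$, $\skX_0$ is never handed out, and the only signatures under $\skX_0$ that $\Adv$ sees are the certificates $\crtset$. Using this, the first hybrid replaces the value $\cxzero$ underlying $(\skX_0, \pkX_0)$ by a uniformly random string, turning the master pair into a freshly sampled, independent key pair. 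Indistinguishability follows from key indistinguishability of $\DHKA$ at $\cvzero$: the reduction corrupts every other node (all of them descendants of $\cvzero$, hence unable to derive its key, so this is admissible), injects the challenge value as $\cxzero$, simulates the whole game, and reports whether $\Adv$ forges. The freshness of $(\skX_0,\pkX_0)$ is exactly what later lets me embed a signature challenge at $\pkX_0$.

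I then dissect a winning forgery $(\cvi, m, \sigma)$ with $\sigma = (\pkX, \sigma', \hatsigma)$; verification forces $\Verify_{\Sig}(\pkX_0, (\pkX, \cli), \hatsigma) = 1$ and $\Verify_{\Sig}(\pkX, m, \sigma') = 1$, where $\cli = \Pub(\cvi)$ is fixed, and I split on whether $\pkX$ equals the honest key $\pkX_i$. If $\pkX \neq \pkX_i$, then, because labels are pairwise distinct, the pair $(\pkX, \cli)$ differs from every $(\pkX_j, \clj)$ actually signed by the master, so $\hatsigma$ is an $\Sig$-forgery under $\pkX_0$. The reduction $\Advfirst$ plants its challenge key as $\pkX_0$ (legitimate after the hybrid), generates the rest of the $\DHKA$ honestly so that it knows every seed and every non-root key, answers $\CorrOracle$ directly and $\SignOracle$ either directly or, for the certificates and for queries at $\cvzero$, through its own signing oracle, and outputs $((\pkX, \cli), \hatsigma)$; this pair is fresh among the signed messages because labels are distinct and certificate pairs are domain-separated from plain messages.

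If instead $\pkX = \pkX_i$, then $\sigma'$ verifies under the honest key $\pkX_i$ on a message $m$ with $(m, \cvi) \notin \cQQ_\Sign$, i.e.\ a message never signed under $\skX_i$, so $\sigma'$ is an $\Sig$-forgery under $\pkX_i$. Here I would guess the target $\cvstar \sample \cV$ (a $1/|\cV|$ loss), aborting whenever $\Adv$ corrupts $\cvstar$ or one of its ancestors or forges at a different node, then invoke key indistinguishability once more to replace $\cx{\cvstar}$ by a fresh random value so that $\pkX_{\cvstar}$ becomes a fresh key. Finally a signature reduction $\Advsecond$ embeds its challenge as $\pkX_{\cvstar}$: since $\cx{\cvstar}$ is now independent, $\Advsecond$ may generate every seed honestly, derive every other node's signing key, sign all certificates with the known independent $\skX_0$, route $\SignOracle(\cdot, \cvstar)$ to its own signing oracle, and extract $\sigma'$ as its forgery.

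The delicate step, and the one I expect to be the main obstacle, lies inside the key-indistinguishability reduction of the second case: to keep $\cx{\cvstar}$ challengeable the reduction may not learn the seeds of the ancestors of $\cvstar$, yet $\Adv$ may still query $\SignOracle(\cdot, \cvj)$ for such an ancestor $\cvj$ and expects a signature valid under the honest key $\skX_{\cvj}$. Answering these queries is what forces reliance on the strong form of key indistinguishability of $\DHKA$ (\Cref{def:sec_dka}), which must expose the access keys of all non-challenge nodes while keeping the challenged key pseudorandom; I would also have to check that the guess of $\cvstar$ stays independent of $\Adv$'s view, so that the $1/|\cV|$ factor is sound. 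Summing the certificate bound, the inner-signature bound, and the two key-indistinguishability gaps yields $\prob{\GameHSigForge = 1} \le \negl$.
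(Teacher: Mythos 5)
Your overall architecture---randomizing signing pairs via key indistinguishability, guessing the forgery node at a $1/|\cV|$ cost, and splitting on whether the forged certificate carries the honest key \pkXi{} (exactly what the paper's random bit $d$ in \Cref{lem:reduction_proof_cert_arcula} implements)---matches the paper's proof, and your root hybrid and both signature embeddings are sound. The genuine gap is the second key-indistinguishability step in your case $\pkX = \pkXi$, and the fix you sketch does not work. To embed the KI challenge at \cvstar{} you must leave the keys of the intermediate ancestors $\cvj \in \ancstar{} \setminus \set{\cvzero, \cvstar}$ as the honestly derived ones; but the reduction holds neither the seed nor, by the admissibility condition of \Cref{def:sec_dka} (\CorrOracle{} accepts only $\cvi \notin \ancstar{}$), the secrets $\cS{j}$ of those ancestors, so it cannot produce the signatures $\Sign_{\Sig}(\skX_j, m)$ that \Adv{} may legitimately request via $\SignOracle(\cdot, \cvj)$ for uncorrupted ancestors. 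You notice this obstacle yourself, but the escape you invoke---a ``strong form'' of key indistinguishability exposing the keys of all non-challenge nodes, ancestors included---is not what \Cref{def:sec_dka} provides: it could not expose the ancestors' secrets $\cS{j}$ (these derive \cxstar{} and would trivialize the game), exposing only the ancestors' keys $\cxi{}$ would be a strictly stronger notion than the one proved in \Cref{th:d_key_ind}, and so your hybrid transition is not licensed by the stated assumptions.

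The paper closes exactly this hole with a telescoping hybrid over the whole ancestor chain: $\Hyb_t$ replaces the signing pairs of the first $t$ nodes of $\anc(\cvj{})$, from the root downward, by freshly sampled ones (\Cref{lem:inductive_proof_cert_arcula}). At step $t$ the reduction self-samples the pairs of $\cvzero, \ldots, \cvtone$ (so it can sign for them), plants the KI challenge at \cvt{}, and corrupts every remaining node---admissible because none of them is an ancestor of \cvt{}---so every $\SignOracle^{\Pi}$ query is answerable at every step. After $|\anc(\cvj{})|$ steps both $\pkX_0$ and the target node's key are independent of the DHKA, and your two reductions (certificate forgery under $\pkX_0$ when $\pkX \neq \pkXi$, inner forgery under the target key otherwise) go through essentially as you describe. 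In short, your initial root hybrid is the first instance of the argument you actually need iterated along the ancestor path; no stronger KI notion is required. (A side remark: your case-A freshness claim quietly assumes that certificate messages $(\pkX, \cli{})$ are domain-separated from ordinary signed messages; the paper's own proof makes the same implicit assumption, so this is not a defect specific to your proposal.)
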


In the context of the security model that we defined in~\Cref{sub:threats_and_security_model}, Arcula's public derivation belongs to the \textbf{Untrusted Environment}---it merely relies on the concatenation of two public values, the master public key \mpk{} and the identifier \cli{} of node \cvi{}.
Redeeming coins requires, instead, the \textbf{Hot Environment}.
The certificate \hatsigmai{} that associates \cvi{} to its public key \pkXi{} is a public parameter of the wallet, but we require the node's corresponding private signing key \skXi{} to sign a new transaction.
Compromising the secrets of node \cvi{} leads to compromising all its descendants, but none of the other nodes.
Finally, the master secret key \msk{} and the related signing key $\skX_0$ must be safely kept in \textbf{Cold Storage}.
We leverage these keys in the setup phase to prepare the authorization certificate \hatsigmai{} of \cvi{} and, for this reason, it is critical to the security of the wallet:
An attacker can use it to forge a certificate that associates any pair of keys to any target node and spend, in turn, the coins held in the entire wallet.

To summarize, Arcula defines a hierarchical deterministic wallet that benefits from the following properties:
\ifCONF{}
\begin{inlinelist}
\else
\begin{enumerate}
\fi
  \item Is secure against privilege escalation (\Cref{prp:secret_keys}).
  \item Generates every cryptographic key from an initial seed (\Cref{prp:determinism}).
  \item Enables identity-based public-key derivation so that users can dynamically derive new public keys without accessing their own private keys (see~\Cref{prp:public_keys}).
  \item Enables secret-key derivation so that users can sign transactions on behalf of their descendants.
  \item Does not rely on any particular digital signature scheme.
  \item The DHKA at the core of Arcula is secure under key indistinguishability and handles any directed acyclic graph encoding a partially ordered hierarchy.
    In addition, it allows dynamic modifications to the hierarchy (\emph{i.e.}, by adding or removing nodes, as we detail in~\Cref{sec:dynamic_changes}) and controlling the key assignment according to some temporal constraints (\Cref{sec:time_bound_deterministic_hierarchical_key_assignment}).
\ifCONF{}
\end{inlinelist}
\else
\end{enumerate}
\fi
\section{Arcula in the real world}\label{sec:arcula_in_the_real_world}
With Arcula, we design a future-proof HDW that brings identity-based signatures to the blockchain item and that, at the same time, is also suitable to the most widely used crypto-systems of today.
We aim to join theory and practice, to create a wallet that fulfills our current and future needs in the crypto-coins space.

For this reason, we constrain our design with as few cryptographic assumptions as possible.
Arcula works with any existentially unforgeable signature scheme and only requires the verification of a signature on an arbitrary message (\emph{i.e.}, the certificate that associates the signing key to their corresponding user).
This design makes it immediately compatible with the Ethereum blockchain, which implements a Turing-complete language, and with all the forks based on Bitcoin that allow the signature verification of arbitrary messages (\emph{e.g.}, Bitcoin Cash).
The original Bitcoin implementation, instead, does not allow such operation (in fact, it goes as far as disabling the operations of string concatenation and integer multiplication that, initially, it allowed).

In this section, we show how Arcula performs in the real world, and we show how to spend and receive funds, out of the box, on the Bitcoin Cash blockchain.
Next, we discuss the modifications that would make it compatible with the original Bitcoin protocol and how it is possible to disable public derivation to obtain unlinkability of transactions.

\subsection{Technical Implementation}\label{sub:implementation}
Our open-source implementation of Arcula is available online.\footnote{
\ifFULL{}
  Available at \url{https://github.com/aldur/Arcula}.
\else
  Available at \url{https://doi.org/10.5281/zenodo.2791637}.
\fi
}
We instantiate the underlying DHKA leveraged by Arcula with the pseudorandom function $\fash_{k}(x) = \hash(k \concat x)$ (where $\hash{}(x)$ is the hash function $\text{SHA3-256}(x)$) and the authenticated AES256 with Galois/Counter Mode (GCM) as the symmetric encryption scheme.
We generate a hierarchal deterministic wallet based on the tree defined in BIP43 and BIP44~\cite{marek2014bip43,marek2014bip44}, where the keys to different crypto-coins correspond to different subtrees, and each branch of the subtrees is a chain associated to a single account that contains multiple receiving addresses.
We obtain an initial seed \cSeed{} of $512$ bits by following the specification of BIP39~\cite{marer2013bip39} that generates a seed from a random mnemonic sequence.
We generate the wallet that we use in our tests by fixing the randomness of the mnemonic generation process to the result of the operation $\hash(\text{\texttt{correct horse battery staple}})$.

\subsection{Arcula in Bitcoin Cash}\label{sub:bitcoin_and_bitcoin_cash}

A Bitcoin transaction is a cryptographically signed statement that transfers some coins from a sender to a receiver.
The sender of the coins signs the transaction through her secret key to spend, in turn, the coins destined to the corresponding public key.
Every transaction specifies a locking and an unlocking script.
These scripts respectively state the necessary conditions to spend, in a future transaction, the coins being transferred (\emph{i.e.}, their locking condition) and provide the information required to redeem them (\emph{i.e.}, to unlock them as a result of a past transaction).
Both scripts are written through a stack-based language that allows simple mathematical operations, stack manipulations, and enables simple cryptographic primitives (\emph{i.e.} computing the result of a hash function and verifying a signature).

A typical Bitcoin locking script specifies the address of the receiver (usually through the hash of its public key) and requires him to provide a valid signature to redeem the coins being transferred.
More in detail, the locking and unlocking scripts of a standard Bitcoin transaction are defined as follows.
Uppercase monospace words indicate operations of the Bitcoin scripting language, while angular brackets enclose variable inputs.
\begin{description}
  \item[Locking: ] \OPDUP{} \OPHASH{} \texttt{<$\hash{(\pk)}$>} \OPEQUALVERIFY{} \OPCHECKSIG{}
  \item[Unlocking: ] \texttt{<$\sigma$>} \texttt{<\pk{}>}
\end{description}
Together, these scripts ensure that the public key \pk{} provided in the unlocking script is the pre-image of the hash $\hash(\pk)$ (the Bitcoin address) contained in the locking script;
then, verify the validity of the transaction signature $\sigma$ under the public key \pk{}.

In Arcula, instead, we identify the nodes of our wallet \cvi{} according to the master public key $\mpk{} = \pkX_0$ and to their public label \cli{}.
For this reason, an Arcula address is simply the concatenation of the byte representations of these values that we encode in the locking script.
The unlocking script, on the other hand, contains the certificate $\hatsigmai \sample \Sign_{\Sig}(\skX_0, (\pkX_i, \cli{}))$ and associating the signing public key \pkXi{} to the node \cvi{} with label \cli{}, and a signature $\sigma$ of the transaction verifiable through the public signing key \pkXi{}.
With Arcula, the locking and the unlocking scripts respectively become:
\begin{description}
  \ifFULL{}
  \item[Locking: ] \OPDUP{} \OPTOALTSTACK{} \texttt{<\cli{}>} \OPCAT{} \texttt{<\mpk{}>} \OPCHECKDATASIGVERIFY{} \\
  \OPFROMALTSTACK{} \OPCHECKSIG{}
\else
  \item[Locking: ] \OPDUP{} \OPTOALTSTACK{} \texttt{<\cli{}>} \OPCAT{} \texttt{<\mpk{}>} \OPCHECKDATASIGVERIFY{} \OPFROMALTSTACK{} \OPCHECKSIG{}
\fi
  \item[Unlocking: ] \texttt{<$\sigma$>} \texttt{<$\hatsigmai$>} \texttt{<\pkXi{}>}
\end{description}
The two scripts:
\begin{inlinelist}
\item Verify that the certificate $\hatsigmai$ is a valid signature of the message $(\pkXi, \cli)$ under the master public key \mpk{};
\item verify the validity of the transaction signature $\sigma$ under the signing public key \pkXi{}.
\end{inlinelist}
In particular, the locking script checks the validity of the certificate through the operation \OPCHECKDATASIGVERIFY{}, which allows the stack-based scripting language to validate a signature of an arbitrary message (the concatenation of $\mpk$ and \cli{} obtained through the operation \OPCAT{}).
The scripting language of the original Bitcoin does not implement such an operation yet.
Nonetheless, a significant portion of the Bitcoin community believes that its adoption would provide substantial benefits to the entire system, \emph{e.g.}, by enabling third-parties to store and verify independent messages on the blockchain.
For this reason, many Bitcoin forks (Bitcoin Cash, Bitcoin Ultimate, and Blockstream, to name a few), that aim at modernizing the protocol and at improving the stack-based language used in scripts, now implement this operation.

In our experiments, we focus, as an example, on Bitcoin Cash---the sixth cryptocurrency by market capitalization at the time of writing---and we evaluate Arcula on its test blockchain.
We first create a transaction\footnote{The transcripts of the transactions are available, respectively, at \url{https://bit.ly/2UI62tt} and \url{https://bit.ly/2UoQNGI}.} that locks $0.5$ BCH (the Bitcoin Cash crypto-coin) to a node of our wallet of~\Cref{sub:implementation}, identified through the master public key \mpk{} (also in the locking script) and the integer label $3$.
Next, we redeem the coins through a second transaction that provides the transaction signature $\sigma$ computed using the signing key \skXi{}, an appropriate certificate \hatsigmai{} signed by the master secret key, and the public signing key \pkXi{}.
We create both the signature and the certificate through the ECDSA signatures scheme on the \texttt{secp256k1} elliptic curve used in Bitcoin, and we encode the integer label \cli{} of the node \cvi{} with $4$ bytes.

\begin{table}[tpb]
  \centering
  \small
  \caption{The script bytes sizes of a transaction to a standard Bitcoin address and to an Arcula address.}\label{tab:script_sizes}
  \begin{tabular}{c c c c}
    \toprule
    Address type & Locking Script & Unlocking Script & Total \\
    \midrule
    Standard   & $24$      & $106$      & $130$ \\
    Arcula    & $43$      & $179$      & $222$ \\
    \bottomrule
  \end{tabular}
\end{table}

\subsection{Transaction Costs}\label{sub:transaction_costs}
To study the costs of Bitcoin transactions to an Arcula address, we analyze the amount of storage that they require on the blockchain.
Every Bitcoin transaction devolves a small amount of fees to the system to incentive its inclusion in the next block of the chain.
Fees are usually measured in coins per byte, and, for this reason, the size of a transaction on the Bitcoin wire protocol is directly related to the amount of fees that it should pay to be included in the blockchain.
In particular, the length of the locking and unlocking scripts influences directly the final transaction cost.
\Cref{tab:script_sizes} compares the sizes, in bytes, of the locking and unlocking scripts of standard Bitcoin transactions and to an address of our wallet.
Every operation of the stack-based scripting language is encoded with a single byte;
a standard Bitcoin address is the result of a hash function that outputs $20$ bytes;
the ECDSA signature and the public key in the unlocking script require, respectively, $73$ and $33$ bytes.
By summing these values up, we find that the locking script of a transaction to a standard Bitcoin address is $24$ bytes long ($4$ script operations plus the receiver address) while the unlocking scripts take $106$ bytes (the ECDSA signature and its associated public key).
In Arcula, on the other hand, the locking script encodes $6$ operations, the identifier of a node (that we encode with $4$ bytes), and the cold storage public key ($33$ bytes, as opposed to its $20$ bytes hash), for a total of $43$ bytes.
The unlocking script, instead, contains two ECDSA signatures (one for the transaction and one for the certificate) and the signing public key;
as a result, it is $179$ bytes long.
Overall, the size of the locking and unlocking scripts for a transaction to an Arcula address is $222$ bytes, $70\%$ longer than the standard address counterparts.

In particular, the Bitcoin users aim at minimizing the size of the locking script, as its associated fees will be paid by the sender of the transaction, \emph{e.g.}, the customer of an online service, and the service providers usually aim at minimizing these costs.
Bitcoin solves this issue through the pay to script hash mechanism, proposed in BIP16~\cite{andresen2012bip16}, that reduces the size of any locking script to a constant at the cost of longer unlocking scripts.
The intuition is that instead of specifying the full locking script, the users can constrain the coins of a transaction by locking them to the hash of the original script;
then, in the unlocking script, they can provide both the pre-image of the hash, \emph{i.e.}, the full locking script, and its required inputs.
This approach brings several advantages.
First, any locking script can be expressed with a constant byte size that results in a fixed cost for the sender.
Second, it hides the details of the locking script until the users reveal the pre-image of the hash in an unlocking script, \emph{i.e.} when they redeem the coins sent by the transaction.
Finally, the Bitcoin protocol proposes a way to encode the pay to script hash locking scripts into standard Bitcoin addresses, so that exchanging transactions of this kind is entirely transparent to the software used by the sender.
By using the pay to script hash mechanism, any user can send a transaction to an Arcula address through her favorite Bitcoin wallet, in a transparent way that does not require any specific software modification to it.
More in details, an Arcula pay to script hash transaction is defined as follows, where the script that we input to the hash function is the locking script of a transaction to an Arcula address that we have seen before:
\begin{description}
\ifFULL{}
  \item[Script: ] \OPDUP{} \OPTOALTSTACK{} \texttt{<\cli{}>} \OPCAT{} \texttt{<\mpk{}>} \OPCHECKDATASIGVERIFY{} \\
  \OPFROMALTSTACK{} \OPCHECKSIG{}
\else
  \item[Script: ] \OPDUP{} \OPTOALTSTACK{} \texttt{<\cli{}>} \OPCAT{} \texttt{<\mpk{}>} \OPCHECKDATASIGVERIFY{} \OPFROMALTSTACK{} \OPCHECKSIG{}
\fi
  \item[Locking: ] \OPHASH{} \texttt{<\hash{(Script)}>} \OPEQUAL{}
  \item[Unlocking: ] \texttt{<$\sigma$>} \texttt{<$\hatsigmai$>} \texttt{<\pkXi{}>} \texttt{<Script>}
\end{description}

The pay to script hash mechanism reduces to $22$ bytes ($2$ operations and a $20$ bytes hash) the size of the locking script and, equivalently, the amount of fees that users have to spend to send funds to an Arcula address.
The size of the unlocking script, on the other hand, affects the fees that the users of Arcula need to pay when spending their coins.
In particular, when using pay to script hash, this amount of fees is slightly larger than the one required for a traditional Bitcoin transaction.
In many cases, however, the benefits that arise with Arcula justify the increase in the transaction cost.
An e-commerce marketplace, as an example, can leverage Arcula's public key derivation to dynamically derive new addresses (\emph{e.g.}, one for each product of her catalog) in an entirely untrusted environment (\emph{e.g.}, an online web-server) while keeping every signing keys at rest in trusted storage.
As a result, the provider obtains the flexibility of handling incoming payments on dynamic addresses and minimizes the risk of losing the coins associated with them.
When compared with the financial costs associated with this risk, the additional fees required by the Arcula transactions are negligible.
The public key derivation also brings other significant benefits.
Many financial regulations require, indeed, companies to be accountable for all the payments that they receive.
With Arcula, an auditor can reach this goal by merely inspecting the blockchain while looking for any address that contains the master public key \mpk{} that identifies the company.
Finally, many enterprises leverage $m$-of-$n$ signatures, where redeeming a transaction requires $m$ valid signatures among $n$ authorized public keys.
Their goal is to enforce the internal structure of the company (\emph{e.g.}, so that either managers or employees can sign transactions) or to divide the responsibility of spending coins evenly.
The unlocking scripts of $m$-of-$n$ transactions have considerable size:
They contain $m$ signatures and $n$ public keys.
By leveraging Arcula and enforcing an appropriate hierarchy that reflects their internal structure, these companies could reduce the size of the unlocking scripts to only two signatures (the transaction signature and the certificate) and two public keys (the master and signing public keys).

\subsection{Optimizations and compatibility with Bitcoin}\label{sub:optimizations}
The current implementation of Arcula does not require any modification to the underlying protocols and blockchains.
Nevertheless, we also propose a set of optimizations that, through minimal modifications to these protocols, reduce both the cost of transactions to Arcula addresses and the amount of storage required on the blockchain.
We begin by noting that any authorization certificate \hatsigmai{} can be used more than once.
 % by the corresponding signing key \pkXi{}.
For this reason, the first optimization that we propose is to \emph{cache} the certificate \hatsigmai{} as soon as it appears for the first time in an unlocking script.
Then, any subsequent transaction signed by \skXi{} could specify a pointer to the certificate (\emph{e.g.}, with a shorter hash) instead of the certificate itself and, in turn, reduce the size of the unlocking script.
As an example, by pointing to the certificate with a $20$ bytes hash, we would reduce the size of the Arcula locking and unlocking scripts to be roughly $20$ bytes longer than their traditional counterparts.
Implementing this optimization requires a new operation in the scripting language to retrieve the certificate from the cache and to verify its validity.
On the other hand, if we allow for more complex modifications, we can change the signature scheme of the underlying protocols to reduce these space requirements to their optimal value further---a single signature per transaction.
Arcula can be implemented with a single signature by leveraging a sanitizable signature scheme~\cite{ateniese2005sanitizable}, \emph{i.e.} a scheme where an authorized party can modify a fraction of the message signed without interacting with the original signer.
The intuition is to combine the certificates with the signatures that authorize transactions:
Now, the certificate of user \cvi{} that associates her to the signing public key \pkXi{} also includes an additional modifiable portion that will be filled with the transaction details.
To spend their coins, the users leverage their sanitizable key to replace the blank transaction with the details that they intend to sign.\footnote{The sanitizable keys can be hierarchically deployed by leveraging a second instance of DHKA.}
In their work, Ateniese \emph{et al.}~\cite{ateniese2005sanitizable} show how to construct a sanitizable signature scheme by combining any signature scheme with a chameleon hash function.
This construction would allow Arcula to be used with the traditional Bitcoin blockchain by implementing the sanitizable signatures on top of the ECDSA signature scheme that it already uses.
In addition, it would not change the expressiveness of the Bitcoin scripting language:
Instead of enabling the verification of signatures on arbitrary messages, it would simply extend the signature verification protocol to account for the certificate embedded in the sanitized signatures.

\subsection{Unlinkability of Transactions}\label{sub:transactions_unlinkability}
Individual users of hierarchal deterministic wallets are typically not interested in public key derivation.
Differently from enterprises and e-commerce marketplaces, for instance, they simply rely on HDW to recover their keys in case of hardware failure or catastrophic loss.
On the other hand, they are often interested in achieving unlinkability of their transactions, \emph{i.e.} in making sure that multiple transactions sent to their wallet can not be correlated together by an observer that passively monitors the blockchain.
In other words, they typically desire to trade the derivation of public keys in an untrusted setting for the ability to receive payments on uncorrelated pseudonyms.

Arcula allows them to reach this goal.
In more detail, these users can ignore the identity-based public key derivation that Arcula provides (and its associated master public key \mpk) and identify the nodes of the wallet with their public signing key \pkXi{}.
On the blockchain, they can receive standard transactions (costing standard transaction fees) on the public signing key \pkXi{} and then sign new transactions to redeem the coins through the corresponding private key \skXi{}.
We generate this pair of keys through a DHKA scheme that is secure under key indistinguishability.
This means that every public signing key in the wallet is unlinkable from the others, because the DHKA cryptographic keys \cxi{} that we use as randomness to generate them are, in turn, indistinguishable from random strings of the appropriate length.
As a result, Arcula provides a provably secure alternative to the hardened mode of BIP32:
Individual users can generate as many pseudonyms as they need by branching or deepening the DAG that encodes their hierarchy, and then leverage the DHKA to generate keys and reliably recover them in case of loss.
Note that this modified version of Arcula does not require validation of signatures of arbitrary messages, enabling its usage with any blockchain system, including Bitcoin.

In addition, Arcula also allows users to achieve unlinkability of transactions while maintaining identity-based public key derivation.
The intuition is to use a chain code $c$, private to the environment where we execute the public derivation, to \emph{perturb} the master secret and public keys so that they look uncorrelated from the original keys to a passive observer.
We perform the perturbation once for each node in the wallet.
As a result, we associate them with a set of perturbed pairs of keys, labeled $(\mski, \mpki)$, that we use to sign the certificate $\hatsigmai$ that associates their public signing key \pkXi{} to their identity \cvi{}.
Now, we can address Bitcoin payments for the node \cvi{} to the $i$-th perturbation \mpki{} of the master public key \mpk{}, that is uncorrelated from any other key of the wallet.

More in detail, let \cGroupg{} be a generator of the elliptic curve used in Bitcoin's ECDSA signature scheme.
The master public key is defined as $\mpk = \mpk_{0} = \pkX_0{} = \cGroupPub{\skX_0{}}$.
Let $c$ be the secret chain code and let \fash{} be a pseudorandom function.
We create the $i$-th perturbed key \mpki{} of the master public key \mpk{} as follows:
\begin{equation*}
  \mpki{} = \cGroupPub{\skX_0{} + \fash_{c}{(\cli{})}} = \cGroupPub{\skX_0{}} \cdot \cGroupPub{\fash_{c}{(\cli{})}} = \mpk \cdot \cGroupPub{\fash_{c}{(\cli)}},
\end{equation*}
where \cli{} is the label of node \cvi{}.
As long as the chain code $c$ is private, this construction ensures the unlinkability of transactions sent to the perturbed addresses~\cite{maxwell2011deterministic}.

In Arcula, we modify the~\Cref{enum:sign_oracle_proof_inner} of \Cref{const:cert-arcula} to sign the certificates $\hatsigmai \sample \Sign_{\Sig}(\mski, \pkXi)$ with the perturbed secret key $\mski{} = \skX_0{} + \fash_{c}{(\cli{})}$.
Note that we remove the label \cli{} from the certificate since now every pair of perturbed keys is uniquely associated with precisely one pair of signing keys, and the perturbation already takes into explicit consider the label \cli{} of node \cvi{}.
Finally, we replace the master public key $\mpk$ in the locking script with the $i$-th perturbed key \mpki{}, that verifies the certificate \hatsigmai{}, as follows:
\begin{description}
  \item[Locking: ] \OPDUP{} \OPTOALTSTACK{} \texttt{<\mpki{}>} \OPCHECKDATASIGVERIFY{} \ifFULL{} \\ \fi{} \OPFROMALTSTACK{} \OPCHECKSIG{}
  \item[Unlocking: ] \texttt{<$\sigma$>} \texttt{<$\hatsigmai$>} \texttt{<\pkXi{}>}
\end{description}
As a result, all the Arcula addresses of the same wallet look uncorrelated when they appear in the locking script of a transaction.
The perturbed private keys \mski{} are effectively equivalent to the original master secret key:
An attacker that compromises any perturbed key can recover the master secret key and compromise the entire wallet by forging new certificates for key pairs that he controls.
For this reason, the perturbed keys shall be kept in cold storage, or better yet, destroyed after the generation of the corresponding certificate.

To conclude, we briefly discuss how to derive, deterministically, the chain code $c$.
We propose to assign a different chain code $c_i$ to each node \cvi{} of the wallet by running our DHKA a second time:
In this way, an attacker that compromises a node in the hierarchy can only uncover the public identifiers of the nodes in its subtree, but would not gain any knowledge about the others in the hierarchy.
\section{Conclusions}\label{sec:conclusions}
In this work, we presented Arcula, a new hierarchical deterministic wallet (HDW) that brings identity-based signatures to the blockchain, and that is secure against privilege escalation.
We first developed a key indistinguishable deterministic hierarchical key assignment (DHKA) scheme, that we use to deterministically generate the set of cryptographic keys at the core of our wallet.
As a result, an attacker that compromises an arbitrary number of users in the hierarchy can not escalate his privileges and compromise other users higher in the hierarchy.
In addition, our wallet allows us to dynamically derive new addresses for receiving payments in an entirely untrusted environment, to recover every cryptographic key from an initial seed provided by the user, and also to spend coins on behalf of users lower in the hierarchy.
Our design of Arcula considers the legacy and future requirements of modern blockchains.
In particular, Arcula is independent of the underlying signature scheme, and it works on top of any protocol that allows the verification of signatures on an arbitrary message (\emph{e.g.}, Bitcoin Cash or Ethereum).
For these reasons, we hope that the outcomes of this work will be twofold:
To provide the secure and efficient hierarchical deterministic wallet that we need today and to propose a future-proof design that supports the financial applications and tools of enterprises and companies at scale.

\ifCONF{}
    \bibliographystyle{IEEEtran}
    \bibliography{IEEEabrv,main}
\else
    \bibliographystyle{plainnat}
    \bibliography{main}
\fi

\ifCONF{}
\appendices{}
\else
\appendix
\fi
\section{Further Preliminaries}\label{sec:further_preliminaries}

\subsection{Pseudorandom Function (PRF) Family}\label{sub:pseudorandom_function_prf_family}
Let $\{\cKK_\secpar, \cXX_\secpar, \cY_\secpar\}_{\secpar \in \NN}$ be a sequence of sets.
For $\secpar \in \NN$, a PRF family \prff{} is a set of functions such that $\fash{}_k : \cXX_\secpar \to \cY_\secpar$ and each function is evaluable by a deterministic polynomial time algorithm $\fash$, \emph{i.e.}, $\fash(k, \cdot) = \fash_k(\cdot)$.

Let \Rand{} be the set of all functions from $\cXX_\secpar$ to $\cY_\secpar$.
For security, we require that a function randomly sampled from \prff is indistinguishable by a function randomly sampled from \Rand{}.
\begin{definition}[Pseudorandomness]\label{def:prf}
A PRF family \prff{} is pseudorandom if for every PPT adversary \Adv{} we have:
\begin{equation*}
    \Biggl| \prob{\GamePRFZero = 0} - \prob{\GamePRFOne = 1} \Biggr| \leq \negl,
\end{equation*}
where the two experiments \GamePRFZero{} and \GamePRFOne{} are defined in the following way:
\begin{pchstack}[center]
    \procedure{\GamePRFZero}{%
        k \sample \cKK_\secpar \\
        d \sample \Adv^{\fash_k(\cdot)}(\secparam) \\
        \pcreturn d
    }
    \pchspace
    \procedure{\GamePRFOne}{%
        f \sample \Rand{} \\
        d \sample \Adv^{f(\cdot)}(\secparam) \\
        \pcreturn d
    }
\end{pchstack}
\end{definition}

\noindent In this paper, we are interested in PRF families such that $\cKK{}_\secpar = \cY_\secpar = \bin^\secpar$.

\subsection{Symmetric Encryption Scheme}\label{subsec:sim-enc}
We follow the definition of symmetric encryption scheme provided by Atallah \emph{et al.}~\cite{atallah:2009:dek:1455526.1455531}.
A symmetric-key encryption scheme $\Pi = (\Gen{}, \enc, \dec{})$ with message space \cMM{} is a triple of polynomial-time algorithm defined in the following way:
    \begin{description}
        \item[\sGen{}:] The randomized key generation algorithm takes as input a security parameter \secparam{} and outputs a secret key \sk{}.
        \item[\sEnc{}:] The deterministic (possibly randomized) encryption algorithm takes as input a secret key \sk{}, a message $m \in \cMM{}$, and outputs a ciphertext $c$.
        \item[\sDec{}:] The deterministic decryption algorithm takes as input a secret key \sk{}, a ciphertext $c$, and outputs a message $m$.
    \end{description}

For correctness, we require that honestly generated ciphertexts must decrypt correctly.
\begin{definition}[Correctness of symmetric encryption]\label{def:enc-correctness}
  A symmetric encryption scheme $\Pi = (\Gen{}, \enc, \dec{})$ with message space \cMM{} is correct if $\forall{\secpar} \in \NN, \forall{m} \in \cMM$:
\begin{equation*}
    \condprob{\dec(\sk, \enc(\sk,m)) = m}{\sk \sample \sGen{}} = 1
\end{equation*}
\end{definition}

For security, we are interested in semantic security: It must be infeasible to distinguish between an encryption of a message $m$ from one of a random message.

\begin{definition}[Semantic Security]\label{def:enc-def}
    A symmetric encryption scheme $\Pi = (\Gen, \enc, \dec)$ with message space $\cMM$ is semantically secure if for every PPT adversary \Adv{} we have:
    \begin{equation*}
        \Biggl| \prob{\GameEncPi = 1} - \frac{1}{2} \Biggr| \leq \negl,
    \end{equation*}
    where $\GameEncPi$ is defined in the following way:
\begin{description}
    \item[Setup:] The challenger runs $\sk{} \sample \sGen{}$.
    \item[Challenge:] The adversary specifies a message $m_0 \in \cMM$.
        The challenger picks a random bit $b^* \in \bin$.
        If $b^* = 0$, then it computes $c^* \sample \enc(\sk, m_0)$;
        otherwise, it sets $c^* \sample \enc_\sk(m_1)$, where $m_1 \sample \cMM$.
        The challenger returns $c^*$ to \Adv{}.
    \item[Guess:] The adversary outputs a bit $b \in \bin$. If $b = b^*$ return $1$; otherwise return $0$.
\end{description}
\end{definition}

\section{Security Model of Deterministic Hierarchical Key Assignment}\label{sec:deterministic_wallets_through_dynamic_and_efficient_key_management}
The correctness of a DHKA scheme requires that any user \cvi{} should be able to derive, correctly, the secret key \cxj{} of any user $\cvj \in \desc(\cvj)$ lower in the hierarchy.

\begin{definition}[Correctness of DHKA]\label{def:kidhka}
  A DHKA $\Pi=(\Set,\Derive)$ with seed space $\cSS$ is correct if for every DAG $\cG = (\cV, \cE)$, $\forall \secpar \in \NN$, $\forall v_i \in \cV$, $\forall \cvj \in \desc{}(\cvi)$, $\forall \sSeed{}$:
  \begin{equation*}
    \prob{\cx{j} = \sDerive{}} = 1,
  \end{equation*}
  where $\PubSec = \sdSet$, $(\cSi, \cxi) = \Sec(\cvi)$, and $(\cS{j}, \cx{j}) = \Sec(\cvj)$.
\end{definition}

We now formalize the security level of the scheme.
We adapt the security definition originally defined by Atallah \emph{et al.}~\cite{atallah:2009:dek:1455526.1455531} to account for the determinism in our scheme.
We define the set of ancestors \danc{} of a node \cvi{} to be the set of nodes \cvj{} such that there exists a path $w$ from \cvj{} to \cvi{} in \cG{}.

\begin{definition}[Key Indistinguishability of DHKA]\label{def:sec_dka}
  A DHKA $\Pi = (\Set, \Derive{})$ with seed space $\cSS$ is key indistinguishable if for every PPT adversary \Adv{} and every DAG $\cG = (\cV, \cE)$:
  \begin{equation*}
    \biggl| \prob{\GameSKeyInd{} = 1} - \frac{1}{2} \biggr| \leq \negl{},
  \end{equation*}
  where \GameSKeyInd{} is defined in the following way:
  \begin{description}
    \item [Setup:] The challenger receives a challenge node $\cvstar{} \in \cV$ from the adversary \Adv{}.
      The challenger samples $\cSeed \sample \cSS$, then runs \sdSet{}, and gives the resulting public information \Pub{} to the adversary \Adv{}.
      The challenger samples a random bit $b^{*} \sample \bin$:
      If $b^{*} = 0$, it returns to \Adv{} the cryptographic key \cxstar{} associated to node \cvstar{};
      otherwise, it returns a random key \cxstarbar{} of the corresponding length.
    \item[Query:] The adversary has access to a corrupt oracle $\CorrOracle(\cdot)$. On input $\cvi{} \notin \anc(\cvstar{})$, the challenger retrieves $\csxi = \aSec$ and sends \cSi{} to \Adv{}.
    \item[Guess:] The adversary outputs a bit $b \in \bin$. If $b = b^*$ return $1$; otherwise return $0$.
  \end{description}
\end{definition}

\begin{remark}
We note that the adversary \Adv{} depicted in \GameSKeyInd{} is a static adversary who chooses the challenge node \cvstar{} before the experiment begins.
Ateniese \emph{et al.}~\cite[Theorem 1]{ateniese2012}, however, prove that any hierarchical key assignment scheme secure (in the sense of \GameSKeyInd{}) against a static attacker is also secure against an adaptive attacker, \emph{i.e.}, against an adversary that adaptively chooses the challenge node \cvstar{}.
The authors prove that the two security models are polynomially equivalent since there exists a reduction between the static and the adaptive adversaries.
The static adversary can simply guess the challenge node \cvstar{} of the adaptive adversary and abort the simulation if the guess is incorrect.
For these reasons, we discuss the security of any DHKA scheme only in the setting of a static attacker.
\end{remark}

\subsection{The DHKA scheme}\label{sub:dhka_implementation}
This section describes the implementation of our deterministic hierarchical key assignment scheme over any DAG \cG{} encoding an access hierarchy.

We assume, without loss of generality, that:
\begin{inlinelist}
\item There exists a unique \emph{root} node $\cvzero{} \in \cV{}$ of \cG{}, \emph{i.e.} the most-privileged node of the hierarchy encoded by \cG{} that can derive the keys of any other node.
  For any DAG \cG{}, it is always possible to elect a root node \cvzero{}.
  Since \cG{} is a DAG, \cvzero{} shall be one of the minimal nodes in a topological ordering of \cG{} and, equivalently, \cvzero{} shall have no ancestors.
  If two or more nodes \cvj{} have no ancestors, then it is always possible to construct a new graph $\cG{}' = (\cV \cup \set{\cvzero{}}, \cE \cup \set{(\cvzero{}, \cvj) \mid \cvj{} \text{ has no ancestors}})$ such that the access hierarchy encoded by $\cG{}'$ is equivalent to the one of \cG{}, where the new node \cvzero{} in $\cG'{}$ is the root of the graph (and has no associated users).
\item That every node \cvj{} has a fixed \emph{parent} node in the hierarchy, \emph{i.e.} a node \cvi{} such that the edge $\sE{}$.
  As an example, we fix the parent node \cvj{} of \cvi{} to be the first ancestor of \cvi{} in any ordering of the nodes of the graph \cG{} (\emph{e.g.}, obtained with a depth-first-search of the graph) such that \sE{}.
\end{inlinelist}

At a high level, we build on the randomized hierarchical key assignment scheme of Atallah \emph{et al.}~\cite{atallah:2009:dek:1455526.1455531} where each node \cvi{} of the hierarchy is identified by a random label \cli{} and holds a random secret information \cSi{}, that it will use to generate its own cryptographic key and to derive the keys of the nodes lower in the hierarchy.
In our scheme, we modify the original design so that both the label \cli{} and the secret information \cSi{} are deterministic.
We label each node through its index\footnote{
We will extend the node labels with a version number when handling dynamic changes to the hierarchy of the DHKA.
We refer the reader to~\Cref{sec:dynamic_changes} for more details.
} (\emph{i.e.}, $\cli{} = \cvi{}$) and we derive its secret information \cSi{} deterministically (through a pseudorandom function) from the secret information of its parent.
We formally define our implementation of DHKA as follows.
\begin{construction}\label{const:dka}
  Let \prff{} and $\Epsilon{} = (\Gen, \enc, \dec)$ be respectively a family of pseudorandom functions and a symmetric key encryption scheme.
  Let \sG{} be a directed acyclic graph representing an access hierarchy.
  We build a DHKA scheme in the following way:
  \begin{description}
    \item[$\Set{}(1^\secpar, \cG{}, S_{-1})$:] On input the security parameter, a directed acyclic graph \sG{}, and an initial seed $S_{-1}$, the algorithm proceeds as follows:
      \begin{enumerate}
        \item Compute \scSzero{} for $\cv{0} \in \cV$, where \sclzero{} and \cvzero{} is the \emph{root} of the directed acyclic graph \cG.\label{item:difference_root}
        \item For each vertex $\cvi \in \cV$ and $\cvj \in \cV$ such that \cvj{} is the \emph{parent} of \cvi{}, compute \scSi{} where \sli{}.\label{item:difference_node}
        \item For each vertex $\cvi \in \cV$ compute \sti{} and \sxi{}.
        \item For each edge \sE, compute \srij{} and \syij{}.\footnote{We implicitly assume that the PRF output space and the symmetric encryption key space have the same distribution. In alternative, \crij{} can be used as randomness of key generation algorithm \Gen{}.}
      \end{enumerate}
      Finally, the algorithm returns the public mapping \sPub{} and the secret mapping \sSec{}, defined as:
      \begin{gather*}
        \dPub \\
        \dSec
      \end{gather*}
    \item[$\Derive{}(\cG{}, \Pub{}, v_i, v_j, S_i)$:] On input a directed acyclic graph \sG{}, a public mapping \Pub{}, two nodes $v_i, v_j \in \cV$, and a seed $S_i$, the algorithm proceeds as follows:
      \begin{enumerate}
        \item If there is no path from \cvi{} to \cvj{} in \cG{}, return $\bot$;
        \item If $i = j$, retrieve \cli{} from \Pub{} and return \sxj{};
        \item Otherwise, compute \sti{} and set $\bar{i} = i$ and $t_{\bar{i}} = t_i$; then
          \begin{enumerate}
            \item Let $\bar{j}$ be the successor of $\bar{i}$ $\text{in}$ the path from \cvi{} to \cvj{}. \label{itm:derive}
            \item Retrieve $\cl{\bar{j}}$ and $\cy{\bar{i}\bar{j}}$ from \Pub{}
            \item Compute $\crr{\bar{i}\bar{j}} = \fash_{\ct{\bar{i}}}{(10 \concat \cl{\bar{j}})}$ and $\ct{\bar{j}} \concat \cx{\bar{j}} = \dec_{\crr{\bar{i}\bar{j}}}{(\cy{\bar{i}\bar{j}})}$.
            \item Set $\bar{i} = {\bar{j}}$ and $\ct{\bar{i}} = \ct{\bar{j}}$.
            \item If $\bar{j} = j$ then return $\cxj$; otherwise repeat from \Cref{itm:derive}.
          \end{enumerate}
      \end{enumerate}
  \end{description}
\end{construction}

The proposed key assignment scheme is entirely deterministic.
In particular, it differs from the design of Atallah \emph{et al.} at the~\Cref{item:difference_root,item:difference_node} of the \Set{} algorithm in~\Cref{const:dka}.
The original key assignment scheme draws the values \cli{} and \cSi{} (respectively, \clzero{} and \cSzero) at random.
In our case, instead, we deterministically derive them from the identifier \cvi{} of the node and from the secret information \cSj{} of its \emph{parent} \cvj{} (respectively, from the seed \cSeed{}).

\ifFULL
\paragraph{Computation and space complexity}
\fi
The efficiency of the
\ifCONF HKA \fi
scheme
\ifCONF of Atalla \emph{et al.}~\cite{atallah:2009:dek:1455526.1455531} \fi
is linear in time and space, respectively, to the key derivation distance and the size of the graph.
Let $w$ be the shortest path between \cvi{} and $\cvj{} \in \desc{(\cvi{})}$:
Deriving \cxj{} by starting from \cSi{} requires $|w|$ invocations \fash{} and $|w|$ invocations of $\dec$.
For space complexity, each node \cvi{} in \cV{} is required to store a single secret \cSi{}---the private storage required by each node is proportional to the size \secpar{} of the security parameter.
On the other hand, the public information holds the mapping between nodes and labels and the encrypted information associated with each edge.
As such, the overall space required is linear to $\secpar |\cV| + \secpar |\cE|$.
That said, we note that in our case
\ifCONF (\Cref{const:dka}) \fi
the mapping between nodes and labels is the identity function and that we can further reduce the storage requirements by leveraging the deterministic derivation:
Any parent node \cvj{} can directly derive the secret information \cSi{} of its descendant \cvi{} and, for this reason, we can avoid storing any encrypted information on the edge that connects them.
As a result, we can reduce the size of the encrypted information on the edges and only store them for any node \cvi{} such that there exists an edge \sE{} and \cvi{} is not the parent of \cvj{} and as such cannot deterministically derive the secret value \cSj{} by starting from its own secret value \cSi{}.
With this optimization in place, our scheme is comparable to a tree-based hierarchical key assignment scheme~\cite{crampton2017} where we store the additional derivation keys as encrypted information on the edges instead of storing them as secrets within each node that requires them.
Finally, if the key generation and derivation processes happen on the fly (\emph{i.e.}, when the entire process starts from the seed), then the only private storage required is proportional to the length of the initial seed \cSeed{}, \emph{i.e.}, to the length of the security parameter \secpar{}.

\begin{remark}
  At first glance, it might seem that fixing the randomness of the \Set{} algorithm of the HKA by Atallah \emph{et al.}~\cite{atallah:2009:dek:1455526.1455531} is sufficient to enforce its determinism.
  We remark here that such a solution, alone, does not guarantee this result.
  When we fix the randomness of the \Set{} algorithm of the HKA we are implicitly fixing an ordering on the sampling of the secret values \cSi{} of each node \cvi{}:
  Sampling at random \cSi{} before \cSj{}, as opposed to sampling \cSj{} before \cSi{}, will result in different secret values assigned to each node.
  For this reason, the HKA with fixed randomness would also require additional public information about the ordering of the nodes of the hierarchy.
  Our DHKA, instead, deterministically generates the secret values according to the structure of the hierarchy and not to any ordering of its nodes.
  This approach allows us to design a deterministic scheme that does not require any additional public information and that, furthermore, can take the determinism into account to reduce the amount of encrypted information stored on the edges of the hierarchy.
  % Besides, both the original HKA by Atallah \emph{et al.} and our DHKA handle time-bound constraint (\Cref{sec:time_bound_deterministic_hierarchical_key_assignment}) and dynamic changes to the access hierarchy (\emph{e.g.}, by adding or removing any node or edge, as we detail in~\Cref{sec:dynamic_changes}).
  % When we leverage the DHKA as a building block of Arcula, we are particularly interested in appending new leaves to the hierarchy (\emph{e.g.}, one new node for each incoming payment).
  % In our scheme, we make sure that each new node \cvi{} will always be deterministically assigned the same secret value \cSi{}, regardless of the other nodes of the graph and of the time of its addition to the hierarchy.
\end{remark}

We conclude this section by establishing the following result.
% , that we prove in~\Cref{subsec:proof_dhka}.
\begin{theorem}\label{th:d_key_ind}
  Let \prff{} and $\Epsilon{} = (\Gen{}, \enc, \dec{})$ be respectively a pseudorandom function family and a symmetric encryption scheme.
  If \prff{} is pseudorandom (\Cref{def:prf}) and \Epsilon{} is semantically secure (\Cref{def:enc-def}), then the DKHA scheme $\Pi$ from \Cref{const:dka} is key indistinguishable.
\end{theorem}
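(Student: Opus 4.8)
The plan is to prove key indistinguishability through a sequence of hybrid games that gradually replace every pseudorandom value generated ``above'' the challenge node \cvstar{} by truly random values, and every ciphertext carrying the secrets of \cvstar{} and of its ancestors by an encryption of a dummy plaintext. Since the adversary in \Cref{def:sec_dka} is static, the set $P = \ancstarplus$ of the ancestors of \cvstar{} together with \cvstar{} itself is fixed in advance; by construction $P$ is exactly the set of nodes that can derive \cxstar{}, and \CorrOracle{} never reveals $\cSi$ for any $\cvi\in P$. The decisive structural fact is that $P$ is closed under predecessors: if $(\cvi,\cvj)\in\cE$ with $\cvj\in P$ then $\cvi\in P$, because a path through $\cvj$ to \cvstar{} makes $\cvi$ an ancestor of \cvstar{}. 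Hence every edge whose head lies in $P$ also has its tail in $P$, which is what lets the rewriting stay confined to $P$.

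I start from \GameSKeyInd{} with $b^{*}=0$ (so the challenge is the real \cxstar{}) and first invoke pseudorandomness of \prff{} with the hidden, uniform seed \cSeed{} to replace $\fash_{\cSeed}$ by a random function, making \cSzero{} uniform. I then process the nodes of $P$ in topological order (the root \cvzero{} first, \cvstar{} last, which is legitimate since \cvstar{} is the unique maximal element of $P$), performing for each $\cvi\in P$ three kinds of step. (i) Since $\cSi$ is by now uniform and never revealed, I replace $\fash_{\cSi}$ by a random function (\Cref{def:prf}); this makes $\cti$, $\cxi$, and the secrets of the nodes whose parent is \cvi{} uniform. (ii) For every incoming edge $(\cvk,\cvi)$ -- whose tail $\cvk\in P$ precedes \cvi{} -- I use semantic security (\Cref{def:enc-def}) to replace the ciphertext $\cy{ki}$ by an encryption of a random dummy string; this is sound because its key $\crr{ki}=\fash_{\ct{k}}(10\concat\cli)$ was already made uniform in step (iii) for \cvk{} and is used for that single ciphertext only. (iii) Since $\cti$ is now uniform and, after step (ii), no longer occurs in any plaintext, I replace $\fash_{\cti}$ by a random function, making the keys $\crij$ of all outgoing edges of \cvi{} uniform -- exactly what step (ii) will need for the successors of \cvi{}. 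Each step is a single reduction to \Cref{def:prf} or \Cref{def:enc-def}, and there are $O(|\cV|+|\cE|)$ of them.

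What makes every reduction go through is that the simulator can always produce the entire public mapping \Pub{} without touching the currently challenged key: whenever a PRF key is hidden, the corresponding outputs (the child secrets, $\cti$, $\cxi$, or the $\crij$) are obtained from the oracle, while every ciphertext it must output either has already been turned into a dummy or encrypts values it computes honestly; in particular, when step (iii) hides \cti{}, the incoming ciphertexts of \cvi{} no longer contain \cti{} thanks to step (ii), and in each semantic-security reduction the single key $\crr{ki}$ is precisely the key the simulator does not need to know. Corruption queries for a node outside $P$ are answered with its (now possibly uniform) secret \cSi{}: this is undetectable because recomputing it would require the protected secret of its parent, and the whole subtree of such a node lies outside $P$ and is simulated honestly, so it remains consistent with the public ciphertexts.

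In the final hybrid every $\cSi,\cti,\cxi$ with $\cvi\in P$ is uniform, and all ciphertexts whose head lies in $P$ -- in particular the ciphertexts into \cvstar{}, which are the only public values that ever carried \cxstar{} -- are dummies. Since \cxstar{} is never used as a PRF key or an encryption key, it now occurs solely as the challenge answer, so the view with the real \cxstar{} and the view with a fresh uniform \cxstarbar{} are identically distributed and the advantage in the last hybrid is exactly $0$. Summing the PRF and encryption distinguishing advantages across the hybrids then bounds the advantage in \GameSKeyInd{} by \negl{}. I expect the main obstacle to be precisely the scheduling in (i)--(iii): $\fash_{\cti}$ can be randomized only after the incoming ciphertexts of \cvi{} have been made independent of \cti{}, whereas those ciphertexts can be rewritten only once the predecessors' intermediate keys are already uniform; resolving this mutual dependence is exactly what forces the interleaved, topologically ordered schedule above, and verifying that the simulator never needs the hidden key in any single step is the delicate bookkeeping of the argument.
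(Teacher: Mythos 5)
Your proposal is correct and follows essentially the same route as the paper: the paper likewise first randomizes the root secret \cSzero{} via a PRF reduction (its \Cref{lem:inductive_proof_dka}, your opening step) and then walks the ancestors of \cvstar{} through hybrids that randomize $(\cti,\cxi,\cSi)$, then the edge keys \crij{}, then replace the ciphertexts \cyij{} by encryptions of random messages --- exactly your steps (i), (iii), (ii). The only difference is one of exposition: the paper delegates that per-node hybrid chain to Atallah \emph{et al.}~\cite[Theorem 5.3]{atallah:2009:dek:1455526.1455531} (noting only the extra randomization of \cSi{}), whereas you spell out the scheduling, the predecessor-closure of \ancstarplus{}, and the simulator bookkeeping in full.
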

\begin{proof}
 We prove the theorem by contradiction, using a hybrid argument.
Let \cvstar{} be the challenge chosen by an adversary \Adv{} in the game \GameSKeyInd{}.
We define the following hybrid experiments:
\begin{description}
    \item[$\Hyb_{-1}$:] is exactly the game \GameSKeyInd{}.
    \item[\HybZero:] is the same as $\Hyb_{-1}$, except that the secret \cSzero{} of the root node $\cvzero{} \in \ancstar{}$  is sampled at random.
    \item[$\Hyb^{(a)}_i$:] is the same as $\Hyb^{(c)}_{i-1}$ (for $i = 1$ is the same as \HybZero{}), except that $\ct{i-1}$, $\cx{i-1}$ associated to the node $\cv{i-1}\in \ancstar{}$ and $\cS{i}$ of the node $\cv{i}\in \ancstar{}$ are sampled at random.
    \item[$\Hyb^{(b)}_i$:] is the same as $\Hyb^{(a)}_{i}$, except that \crij{} associated to the edge $(\cvi{}, \cvj{})$ (where $\cvi{}, \cvj{} \in \ancstar{}$) is sampled at random.
    \item[$\Hyb^{(c)}_i$:] is the same as $\Hyb^{(b)}_{i}$, except that \cyij{} associated to the edge $(\cvi{}, \cvj{})$ (where $\cvi{}, \cvj{} \in \ancstar{}$) is an encryption of a random message, \emph{i.e.}, $\cyij{} \sample \enc_{\crij}(\hat{m})$ where $\hat{m}$ is sampled at random.
\end{description}

Our DHKA is identical to the HKA of Atallah \emph{et al.}~\cite{atallah:2009:dek:1455526.1455531}, except that the secret $\cS{i}$ of a node \cvi{} is computed by evaluating $\cS{i} = \fash_{\cS{j}}(11\concat\cli{})$ where $\cS{j}$ is the secret of the parent \cvj{} of \cvi{} (in~\cite{atallah:2009:dek:1455526.1455531} each \cSi{} is sampled at random).
Hence, the proof is analogous to~\cite[Theorem 5.3]{atallah:2009:dek:1455526.1455531} except that we need to prove that each $\cS{i}$ is indistinguishable from random.
For this reason, we modify the game $\Hyb^{(a)}_{i}$ (defined in~\cite[Theorem 5.3]{atallah:2009:dek:1455526.1455531}) in such a way that the secret $\cS{i}$ is sampled at random too (in addition to $\ct{i-1}, \ck{i-1}$).
Then, we prove the same result for the root node \cvzero{} by adding an additional game $\Hyb_{-1}$ and by showing $\Hyb_{-1} \approx_c \HybZero$.

\begin{lemma}\label{lem:inductive_proof_dka}
    Let \prff{} be a secure pseudorandom function, then $\Hyb_{-1} \approx_c \HybZero$.
\end{lemma}
\begin{proof}
    We assume that there exists a DAG $\cG = (\cV, \cE)$ and a distinguisher \Dist{} that has a non-negligible advantage in distinguishing between $\Hyb_{-1}$ and $\HybZero$.
    Then, we build an adversary \Adv{} that distinguishes \GamePRFZero{} and \GamePRFOne{} as follows:
    \begin{enumerate}
      \item \Dist{} outputs the challenge $\cvstar{}$.
      \item \Adv{} simulates \Set{} as follows:
          For the root node \cvzero{}, set $\cSzero{} = \PRFOracle(11||\cl{0})$.
          For any other node $\cvj{}$, compute $\cS{j}$ as described in \Cref{const:dka}.
          Then, for each node $\cvi{} \in \cV$ and for each edge \sE{}, compute the secret values \sti{}, \sxi{}, \srij{}, \syij{} as described in \Cref{const:dka}.
          \Adv{} sets $\cxstar^0 = \cxstar{}$ and $\cxstar^1 = \cxstarbar{}$ where $\cxstarbar$ is sampled at random.
          Finally, \Adv{} sends \Pub{} and $\cxstar^d$ to \Dist{} where $d$ is a random bit.
      \item \Adv{} answers any $\CorrOracle^{\Pi}(\cvi{})$ query by returning \cSi{}.
      \item \Dist{} outputs a bit $d'$ and $\Adv{}$ completes the simulation of the experiments $\Hyb_{-1}$ and $\HybZero$ by returning $1$ if $d = d'$; otherwise it returns $0$.
      \item Lastly, $\Dist{}$ outputs its guess. \Adv{} outputs any bit $b$ that \Dist{} outputs.
    \end{enumerate}

    When \Adv{} is playing respectively \GamePRFZero{} and \GamePRFOne{}, then the reduction perfectly simulates $\Hyb_{-1}$ and $\HybZero$.
    Indeed, if \Adv{} is playing with \GamePRFZero{} (resp. \GamePRFOne{}) then, $\cS{0} = \PRFOracle(11 \concat \cl{0})$ (resp. \cS{0} is randomly sampled from $\bin^*$).
    In addition, \Adv{} computes all the secrets and edge information following~\Cref{const:dka}.
    As such, the advantage of the attacker \Adv{} in distinguishing \GamePRFZero{} and \GamePRFOne{} is non negligible.
    This concludes the proof.
\end{proof}
The rest of the proof is analogous to the one of Atallah \emph{et al.}, except that in $\Hyb^{(a)}_i$ we additionally sample $\cS{i}$ at random.
We refer to~\cite[Theorem 5.3]{atallah:2009:dek:1455526.1455531} for the proofs that $\Hyb_0 \approx_c \Hyb^{(a)}_{1}$ and $\Hyb^{(a)}_{i} \approx_c \Hyb^{(b)}_{i}$, $\Hyb^{(b)}_{i} \approx_c \Hyb^{(c)}_{i}$, $\Hyb^{(c)}_{i - 1} \approx_c \Hyb^{(a)}_{i}$ for any $i \in \set{2, \ldots, | \ancstar{} | - 1}$.

\end{proof}

\section{Handling Dynamic Changes to a Deterministic Key Assignment Access Hierarchy}\label{sec:dynamic_changes}

This section details how to handle dynamic changes to the access hierarchy (\emph{e.g.}, insertion of a node, or deletion of an edge) of our deterministic key assignment scheme of~\Cref{sec:deterministic_wallets_through_dynamic_and_efficient_key_management} and, in turn, within Arcula, our hierarchical deterministic wallet of~\Cref{sec:arcula}.

Handling dynamic changes to the access hierarchy of the DHKA requires us to consider two problems.
First, how to correctly enforce the hierarchy after the modification (\emph{e.g.}, preventing a node from accessing a subtree after an edge to that subtree is removed);
second, how to deal with modifications to the structure of \cG{} that change the path from the root to any node \cvi{} along which we deterministically derive the secret values \cSi{} (\emph{e.g.}, removing the parent of a node).
We solve these problems through the following strategies.
First, we modify the graph \cG{} by adding an explicit root node to it, \cvroot{}, such that there exists an edge between \cvroot{} and any \emph{root} node of \cG{} (\emph{i.e.}, any minimal node in a topological ordering of \cG{}).
More in details, we define $\cGfirst = (\cV{} \cup \set{\cvroot}, \cE{} \cup \set{(\cvroot, \cvi) \mid \cvi \text{ has no predecessors in \cG}})$.
It is easy to prove that both $\cG{}$ and \cGfirst{} define equivalent access hierarchies.

Next, we associate an additional identifier, that we call \emph{version}, to each node by including it in its label.
Let $\Ver{}: \cV \rightarrow \NN$ be a public mapping associating an integer $\cwi{} \in \NN$ to any node $\cvi{} \in \cV$.
Every node \cvi{} initially starts from version $w_i = 0$, and we modify~\Cref{item:difference_root,item:difference_node} of~\Cref{const:dka} to account for it when deriving the node label \cli{}:
\begin{align*}
  \cli &= \dlvi{}
\end{align*}
Every time we modify the graph \cGfirst{} in such a way that it would require updating the secret of a node \cvi{}, we do so by updating its version \cwi{}, deterministically computing its new label \cli{}, and, in turn, its new secret \cSi{}.

In the remainder of this section, we leverage the version associated to each node to perform a \emph{rekey} procedure, defined as follows for every node \cvh{} and for every node \cvp{} such that \cvp{} is the parent of \cvh{} in \cGfirst{}.
\rekey{}

Finally, we deal with the dynamic modifications of the graph:
\begin{description}
  \item[Deletion of an edge:] Let \sE{} be the edge that is to be removed from \cGfirst{}.
    Our goal is twofold:
    First, to prevent \cvi{} from accessing the cryptographic keys of \cvj{}.
    Second, to make sure that if the deletion of the edge changes the derivation path from the root \cvroot{} of the hierarchy to \cvj{}, then the deterministic generation of the secret \cSj{} changes accordingly.
    We begin by tackling this last problem.
    Let \cvp{} be the parent node of \cvj{} in \cGfirst{}.
    If $\cvi{} = \cvp{}$ and if there is no other edge $(\cvpfirst, \cvj{}) \in \cE$ (\emph{i.e.}, there does not exist another predecessor of \cvj{} that is a candidate to become its new parent), then the deletion of the edge \sE{} results in disconnecting of \cvj{} from the access hierarchy.
    In that case, we add a connecting edge $(\cvroot{}, \cvj{})$ to \cGfirst{} that creates a single-hop path from the root to \cvj{} and allows the deterministic key derivation of its secret \cSj{}.
    We note that the addition of this edge does not modify the access hierarchy, \emph{i.e.} it does not allow \cvj{} to derive the secrets of any node that was not previously between its descendants $\desc{(\cvj{})}$.

    Next, we prevent \cvi{} from accessing the cryptographic keys of \cvj{} by performing the \emph{rekey} procedure for each node $\cvh{} \in \desc{(\cvj)}$ (this includes \cvj{} as well).% and for each node \cvp{} such that \cvp{} is the parent of \cvj{} in the graph \cGfirst{} after removing the edge \sE{}.
  \item[Deletion of a node:] The deletion of any node \cvi{} corresponds to first removing all the incoming and outgoing edges of \cvi{} through the procedure specified above.
    Then, to removing the public and secret information associated with \cvi{} from the \Pub{}, \Sec{}, and \Ver{} data structures.
  \item[Insertion of an edge:] Let \sE{} be the edge to be included into \cGfirst{}.
    We consider two cases:
    \begin{itemize}
      \item After the addition of the edge $\cvi$ is the parent of $\cvj$ in \cGfirst{}.
        As before, we perform the \emph{rekey} procedure for each node $\cvh{} \in \desc{(\cvj)}$ to update their secret values and to allow the deterministic derivation.
        % and for each node \cvp{} such that \cvp{} is the parent of \cvj{} in the graph \cGfirst after inserting the edge \sE{}.
      \item Otherwise,
        compute \srij{}, \syij{}, and augment \Pub{} to contain the mapping $(\cvi, \cvj) \mapsto \cyij$.
    \end{itemize}
  \item[Insertion of a new node:] Let \cvi{} be the node to insert, together with a set of new edges in and out of it.
    Let \cvj{} be the parent of \cvi{}.
    We begin by computing a deterministic public label \sliver{} (where $\cwi{} = 0$) and a deterministic secret value \scSi{};
    then, we compute \ski{} and we augment \Pub{} with the mapping $\cvi \mapsto \cli$, \Sec{} with the mapping $\cvi \mapsto (\cSi, \cxi)$, and \Ver{} with the mapping $\cvi \mapsto 0$.
    Finally, we proceed to insert the edges one by one using the edge insertion procedure specified above.
  \item[Key Replacement:] To replace the cryptographic key \cxi{} associated to any node \cvi{}, we perform the \emph{rekey} procedure for each node $\cvh{} \in \desc{(\cvi)}$.% and for each parent \cvp{} of \cvh{}.
\end{description}

This approach allows our deterministic key assignment scheme to handle dynamic changes to its access hierarchy and requires the manager of the key assignment (\emph{e.g.}, a crypto-currencies exchange) to keep track of the version of the nodes stored within the \Ver{} mapping in addition to the structure of the graph \cG{}.
Because of the determinism of our scheme, every change to a node \cvj{} also propagates to all its descendants.
As an example, the replacement of its secret key \cxj{} requires incrementing its version $\cwj{}$ in the label \clj{} to compute a new secret value and a new cryptographic key.
In turn, this causes the secret information and the cryptographic keys of all its descendants \cvi{} to change as well (because of the deterministic derivation of the secret values \scSi{}).
The cost of such an update depends on the particular application and the structure of the access hierarchy.
If we use the DHKA to handle the keys associated with traditional Bitcoin transactions, for example, updating the cryptographic key of a node requires sending its funds to a new address and involves the payment of a transaction fee.
Most of the times, however, we are particularly interested in appending new leaves to the access hierarchy (\emph{e.g.}, to create a new node for an incoming payment).
This operation is a particular case of the insertion of a new node with a single incoming edge.
It never modifies any derivation path, and, as a consequence, it does not perform the \emph{rekey} procedure, it does not change any cryptographic key, and it does not require transactions on the blockchain.
Finally, when we use Arcula to enable the public derivation of addresses, we identify the nodes of the wallet through the master public key \mpk{} and a label.
Both the addition of a new node and the update of the label \clj{} of an existing node \cvj{} result in a new Arcula address (\emph{i.e.}, a locking script that contains a new label).
Spending the funds destined to the new address requires a new certificate, signed by the master secret key, that associates the new public key $\pkX_j'$ (obtained from the new intermediate key $x'_h$ after the \emph{rekey} procedure) to the new (or updated) label \clj{}.
\section{Time-Bound Deterministic Hierarchical Key Assignment}\label{sec:time_bound_deterministic_hierarchical_key_assignment}
A hierarchical key assignment scheme aims at assigning a cryptographic key to every user of an access hierarchy so that users with higher privileges can autonomously derive the keys of the others within their subtrees, \emph{i.e.}, with lower privileges in the hierarchy.
Many use cases require constraining these assignments according to some time restrictions.
For example, a service provider aims to provide a user with her cryptographic keys only as long as she pays for her subscription to the service.
To achieve this goal, it can leverage a key assignment scheme that takes time into account, and that enables the users to deriver their cryptographic keys during a given period only (\emph{e.g.}, one month).
This section details how we incorporate these temporal capabilities into the deterministic hierarchical key assignment scheme of~\Cref{sec:deterministic_wallets_through_dynamic_and_efficient_key_management} and within Arcula, our design hierarchical deterministic wallet (\Cref{sec:arcula}).

In the last few years, many researchers focused on how to incorporate temporal capabilities into HKA schemes~\cite{ateniese2012,atallah2007,desantis2008}.
The solutions proposed first modify the hierarchy of the assignment to consider, at the same time, both the access privileges and the temporal constraints.
Then, assign a set of secrets to the nodes of the augmented hierarchy so that the users can perform the key derivation according to the time constraints.

We add these constraints to our DHKA by relying on the work of De Santis \emph{et al.}~\cite{desantis2008} that shows how to design a time-bound key-indistinguishable HKA scheme from any provably secure HKA scheme (and, in particular, from our DHKA).
Let \sG{} be an access hierarchy and let $\cTT{} = \set{t_1, t_2, \ldots, t_n}$ be a sequence of distinct time periods.
Each user \cvi{} belongs to a node of the hierarchy for a non-empty contiguous subsequence $\cTTi{} = \set{t_j, \ldots, t_k} \subseteq \cTT$ of time periods.\footnote{In~\cite{desantis2008} the subsequence of time periods of a node $\cvi{} \in \cV$ is denoted by $\lambda_i$.}
Let $\cPP{} = \set{\cTTi{}}_{\cvi \in \cV}$ be the set of time subsequences \cTTi{} when every user $\cvi{} \in \cV$ belongs to the hierarchy.
The authors start from the observation that the contiguous subsequences $\cTTi{} \in \cPP$ implicitly define a partially ordered hierarchy, where $\cTTi{} < \cTTj{} \iff \forall t_k \in \cTTi \implies t_k \in \cTTj$, \emph{i.e.} iff \cTTi{} is included in \cTTj{}.
They call this relation the \emph{interval hierarchy}, and they use its minimal representation, where every node except the leaves has precisely two edges, to augment the original access hierarchy encoded by the graph \cG{}.
As a result, they build a new graph, $\cGt{} = (\cV_{\cTT{}}, \cE{}_{\cTT})$, that enforces both the access and the interval partially ordered hierarchies.
$\cGt{}$ contains a copy of the interval hierarchy for each node in \cG{}.
A user \cvi{} derives the cryptographic key assigned to its descendant \cvj{} for the period $t_k \in \cTTj{}$ by following the path in the augmented graph \cGt{} along the copy of the interval hierarchy related to \cvj{} and then through the original access hierarchy encoded by $\cG{}$.
The instantiation of the (D)HKA scheme on the graph $\cGt{}$ results in a (deterministic) time-bound hierarchical key assignment scheme.

By construction, the number of nodes and edges in \cGt{} grows quadratically in the size of \cTT{} and in the dimension of \cG{}.
In turn, the amount of public information required by a generic HKA scheme on \cGt{} grows comparably.
As we have seen in~\Cref{sub:dhka_implementation}, however, the determinism of our DHKA scheme allows us to reduce the amount of public information required significantly:
The nodes of the access hierarchy can derive the secret information of their descendants by leveraging their own secrets and only rely on the public information when a node has two or more predecessors.
In the same way, when we augment the access hierarchy encoded by \cG{} to account for the interval hierarchy into \cGt{}, the determinism of the scheme allows us to reduce the amount of public information required.
The augmented hierarchy \cGt{}, indeed, stores a copy of the minimal interval hierarchy for every node of \cG{}.
Every node of the minimal interval hierarchy only has a single predecessor and, as a result, does not require any public information associated with its edges.
For this reason, when we leverage our design of DHKA to incorporate the temporal capabilities into an HKA scheme, the size of the public information required grows only linearly with the dimension of the access hierarchy \cG{} and, in particular, is independent of the cardinality of \cTT{}.

To conclude, we show how to incorporate these temporal capabilities into Arcula, our design of HDW based on DHKA and digital signatures.
Our construction provides the users of the access hierarchy with a certificate and a signing key.
The certificate, signed by the master secret key, authorizes the signing key to spend the coins addressed to their identities.
When we add the temporal capabilities to the DHKA, we assign a different signing key to each user \cvi{} for each time period $t_j \in \cTTi{}$;
then, we provide her with a certificate $\cert_{i,j}$ for each key.
We prevent the users from signing new transactions through an outdated key by adding an expiration date to these certificates so that they are only valid until the end of the period $t_j$.
As a result, every user \cvi{} will require an updated certificate after each time period passes.
The stack-based scripting language of Bitcoin Cash does not allow yet to check for the expiration date of a certificate.
For this reason, our design of time-bound Arcula requires, at the time of writing, a more powerful scripting language, \emph{e.g.} an Ethereum smart contract.

% \section{Security proofs}\label{sec:missing_proofs}

% \subsection{Proof of \texorpdfstring{\Cref{th:d_key_ind}}{Theorem 4.1}}\label{subsec:proof_dhka}

\section{Proof of \texorpdfstring{\Cref{thm:arcula_euf}}{Theorem 5.1}}\label{subsec:proof_arcula}
We prove the theorem by contradiction, using a hybrid argument.
Let $(\cvj{}, m, \sigma)$ be the forgery returned by \Adv{} in the game \GameHSigForge{}.
We define the following hybrid experiments:
\begin{description}
    \item[\HybZero:] is exactly the game \GameHSigForge{}.
    \item[$\Hyb_t$:] is the same as $\Hyb_{t-1}$, except that the challenger generates at random the signature key pairs $(\skX_i, \pkX_i)$ for the first $t$ nodes in $\anc(\cvj{})$.
        More in details, let $\anc(\cvj) =\set{\cvzero, \ldots, \cvt{}, \ldots, \cvj{}}$, for every $\cvi{} \in \set{\cvzero, \ldots, \cvt{}}$ the challenger generates the signature key pair $(\skX_i, \pkX_i)$ by running $\Keygen_{\Sig}(\secparam)$.
\end{description}
The proof idea is to first show, using a hybrid argument, that $\HybZero{} \approx_c \Hyb_{|\anc(\cvj)|}$. Hence, a potential adversary $\Adv{}$ has the same advantage in both $\HybZero$ and $\Hyb_{|\anc(\cvj)|}$, with overwhelming probability.
Then, we show that an adversary $\Adv{}$ for $\Hyb_{|\anc(\cvj)|}$ implies an adversary $\Adv'{}$ for \GameSigForgeSig{}.

\begin{lemma}\label{lem:inductive_proof_cert_arcula}
    If \DHKA{} is key indistinguishable, then $\Hyb_{t-1} \approx_c \Hyb_{t}$ for every $1 \leq t \leq |\anc(\cvj)|$.
\end{lemma}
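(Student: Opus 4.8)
The plan is to prove the claim by a reduction to the key indistinguishability of the underlying DHKA scheme (\Cref{def:sec_dka}). The only difference between $\Hyb_{t-1}$ and $\Hyb_{t}$ lies in how the challenger produces the signature key pair of the $t$-th ancestor $\cvt{}$ of $\cvj{}$: in $\Hyb_{t-1}$ it is obtained as $(\skX_t, \pkX_t) = \Keygen_{\Sig}(\secparam; \cxt{})$, using the genuine DHKA cryptographic key $\cxt{}$ as randomness, whereas in $\Hyb_{t}$ it is drawn as $(\skX_t, \pkX_t) \gets \Keygen_{\Sig}(\secparam)$ with fresh coins. First I would observe that, since $\cxt{}\in\bin^\secpar$ is used only as the randomness of $\Keygen_{\Sig}$, replacing it by a uniformly random string of the same length yields exactly the distribution of a freshly sampled key pair; hence the two hybrids differ solely by whether $\cvt{}$'s key pair is seeded by the real DHKA key or by a random string. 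This is precisely the switch that key indistinguishability licenses at the challenge node.

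Concretely, I would build a DHKA distinguisher $\Advfirst{}$ that embeds its challenge at $\cvt{}$. Because the forgery target $\cvj{}$ is revealed only at the end of the game while the challenge $\cvstar{}$ must be committed up front, $\Advfirst{}$ first guesses $\cvj{}\sample\cV$ (losing a factor $1/|\cV|$ and aborting on a wrong guess, exactly as in the static-to-adaptive reduction recalled after \Cref{def:sec_dka}), which fixes the ordered ancestor set $\anc(\cvj{})$ and hence $\cvt{}$. It sets $\cvstar{}=\cvt{}$, receives the public mapping \Pub{} and the challenge key $\cx$ (equal to $\cxt{}$ if $b^{*}=0$ and to a random $\cxstarbar{}$ if $b^{*}=1$), and simulates the HDW experiment: it sets $(\skX_t,\pkX_t)=\Keygen_{\Sig}(\secparam;\cx)$; it samples fresh key pairs for the first $t-1$ ancestors of $\cvj{}$ (as both hybrids prescribe); and for every remaining node $\cvi{}$ it queries its own corruption oracle on $\cvi{}$, recovers $\cSi{}$, computes $\cxi{}=\DeriveDHKA(\cG{},\Pub{},\cvi{},\cvi{},\cSi{})$, and sets $(\skX_i,\pkX_i)=\Keygen_{\Sig}(\secparam;\cxi{})$. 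With $\skX_0$ in hand it signs all certificates $\hatsigmai{}$, assembles \pp{}, answers $\Adv{}$'s corruption queries from the secrets already recovered (aborting with a uniform output only if $\Adv{}$ corrupts one of the first $t$ ancestors of $\cvj{}$, an event that already invalidates the forgery), and answers signing queries with the stored $\skX_i$. If $\Adv{}$'s forgery targets the guessed node, $\Advfirst{}$ outputs $0$ when the forgery is valid and $1$ otherwise.

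The soundness hinges on two observations I would verify carefully. First, all cryptographic keys $\cxi{}$ that $\Advfirst{}$ must reconstruct belong to nodes outside the first $t$ ancestors of $\cvj{}$, and these are all corruptible in the DHKA game: since $\cvt{}\in\anc(\cvj{})$ we have $\ancstar{}=\anc(\cvt{})\subseteq\anc(\cvj{})$, and in a topological ordering every ancestor of $\cvt{}$ precedes $\cvt{}$, so $\anc(\cvt{})$ lies inside the already-randomized prefix and every node $\Advfirst{}$ corrupts satisfies $\cvi{}\notin\anc(\cvstar{})$, as the oracle requires. Second, the corruption answers must stay consistent with the partly randomized signature keys: a winning $\Adv{}$ never corrupts any node of $\anc(\cvj{})$ (otherwise $\cvj{}\in\cQQ_\corr$ and the forgery is rejected), so the only secrets $\cSi{}$ ever exposed belong to non-ancestors of $\cvj{}$, whose signature keys were genuinely derived from $\cxi{}$; the adversary can therefore recompute them and finds them consistent, while it never learns the secrets of the randomized ancestors. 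Consequently, when $b^{*}=0$ the simulation reproduces $\Hyb_{t-1}$ and when $b^{*}=1$ it reproduces $\Hyb_{t}$ (conditioned on a correct guess), so $\Advfirst{}$'s distinguishing advantage is at least $\tfrac{1}{2|\cV|}\bigl|\prob{\Hyb_{t-1}=1}-\prob{\Hyb_{t}=1}\bigr|$, which is negligible by key indistinguishability, giving $\Hyb_{t-1}\approx_c\Hyb_{t}$.

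I expect the main obstacle to be exactly this consistency bookkeeping between the corruption oracle and the signature keys, together with pinning down which nodes the reduction may legitimately corrupt. The crux is the structural fact that the nodes carrying ``mismatched'' (randomized) signature keys are precisely the ancestors of $\cvj{}$ that the unforgeability game forbids the adversary from corrupting, and that $\anc(\cvt{})$ sits inside that forbidden, already-randomized prefix; once this is established, embedding the DHKA challenge at $\cvt{}$ and the final advantage calculation are routine.
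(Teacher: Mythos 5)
Your proof is correct and takes essentially the same route as the paper's: a reduction to the key indistinguishability of \DHKA{} that guesses the forgery node up front (losing a factor $1/|\cV{}|$ and aborting on a wrong guess), embeds the challenge key as the randomness of $\Keygen_{\Sig}$ at \cvt{}, samples fresh key pairs for the earlier ancestors, and reconstructs all remaining keys through the DHKA corruption oracle. Your explicit checks that $\anc(\cvt{}) \subseteq \anc(\cvj{})$ sits inside the already-randomized prefix (so every corruption the reduction issues is legal) and that a winning adversary never corrupts a node of $\anc(\cvj{})$ (so the revealed secrets stay consistent) make rigorous two points the paper's proof leaves implicit.
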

\begin{proof}
    We assume that there exists a DAG $\cG = (\cV, \cE)$ and a distinguisher \Dist{} that has a non-negligible advantage in distinguishing between $\Hyb_{t-1}$ and $\Hyb_{t}$.
    Then, we build an adversary \Adv{} against the experiment \GameSKeyIndDHKA{} (defined in~\Cref{def:sec_dka}) as follows:

    \begin{enumerate}
        \item \Adv{} samples at random \cvstar{}.
            Let $\ancstar{} = \set{\cvzero, \ldots, \cvt{}, \ldots, \cvstar}$ be the set of ancestors of \cvstar{} according to an ordering of the nodes of the graph (\emph{e.g.}, a topological sorting).
            \Adv{} sends \cvt{} to the challenger and receives \Pub{} and \cx{t}.
        % \item \Adv{} samples the pair $(\skK, \pkK) \sample \Keygen_{\Sig}(\secparam)$.
        \item \Adv{} executes the remaining steps of $\Set_{\Pi}$, except that it skips \Cref{enum:w_set_proof_skip} and it replaces~\Cref{enum:w_set_proof_inner} with the following:
            \begin{itemize}
                \item If $\cvi{} \in \set{\cvzero, \ldots, \cvtone{}}$, then compute $(\skX_i, \pkX_i) \sample \Keygen_{\Sig}(\allowbreak\secparam)$.
                \item Otherwise, if $\cvi{} = \cvt{}$, then compute $(\skX_t, \pkX_t) = \Keygen_{\Sig}(\allowbreak\secparam; \cx{t})$.
                \item Otherwise, send a $\CorrOracle^\DHKA(\cvi)$ query to the challenger and receive $\cSi{} = \derkeyi{}$.
                Compute $\cxi{} = \swDeriveDHKAi{}$ and $(\skX_i, \pkX_i) = \Keygen_{\Sig}(\secparam; \cxi)$.
            \end{itemize}
            Finally, \Adv{} outputs the public parameters $\pp{} = (\cG{}, \Pub{}, \crtset{}, \pkX_0)$.
        \item \Adv{} answers oracle queries in the following way:
        \begin{itemize}
            \item On input $\cvi{}$ for $\CorrOracle^{\Pi}$, \Adv{} invokes $\CorrOracle^{\DHKA}(\cvi{})$ and returns the output.
            \item On input $(m, \cvi{})$ for $\SignOracle^{\Pi}$, \Adv{} returns $\sigma = (\pkX_i, \sigma', \hatsigmai)$ where $\sigma' \sample \Sign_{\Sig}(\skX_i, m)$.
        \end{itemize}
        \item \Adv{} receives the forgery $(\cvj{}, m, \sigma)$.
            It aborts the simulation if $\cvstar \neq \cvj{}$; otherwise it completes the simulation by returning the result of $\Verify_{\Pi}(\pk_j, m, \sigma)$, where $\clj = \Pub(\cvj)$ and $\pk_j = (\pkX_0, \clj{})$.
        \item \Adv{} outputs the decisional bit received from \Dist{}.
    \end{enumerate}
    Let \eabort{} be the event that \Adv{} aborts the simulation.
    It is easy to see that $\prob{\enabort{}} = \prob{\cvstar{} = \cvj{}} = \frac{1}{\mid \cV{} \mid}$.
    Let \GameSKeyIndB{} be the key indistinguishability game with bit $b$.
    Conditioned on the event \enabort{}, when \Adv{} is playing respectively \GameSKeyIndZero{} and \GameSKeyIndOne{}, then the reduction perfectly simulates $\Hyb_{t-1}$ and $\Hyb_{t}$, because \Dist{} can not corrupt any node $\cv{} \in \ancstar$.
    Hence, the advantage of the attacker \Adv{} in winning the game \GameSKeyIndDHKA{} is non-negligible.
    This concludes the proof.
\end{proof}

\begin{lemma}\label{lem:reduction_proof_cert_arcula}
    If \Sig{} is existentially unforgeable, then for every DAG $\cG = (\cV, \cE)$ and PPT adversary $\Adv{}$, $\prob{\Hyb_{|\anc(\cvj)|, \Adv}(\secpar, \cG) = 1} \leq \negl$.
\end{lemma}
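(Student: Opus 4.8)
The plan is to reduce a successful forgery in $\Hyb_{|\ancj|}$ directly to an existential forgery against $\Sig$. The crucial feature of this hybrid is that every signing key pair $(\skXi,\pkXi)$ for $\cvi\in\ancj$ — in particular the master pair $(\skX_0,\pkX_0)$ and the target pair $(\skX_j,\pkX_j)$ — is sampled independently at random, and none of these secret keys is ever exposed to $\Adv$: a valid forgery requires $\cvj\notin\cQQ_\corr$, and corrupting any ancestor of $\cvj$ (or $\cvj$ itself) would place $\cvj$ in $\cQQ_\corr$ because corruption propagates to descendants. Consequently the only $\skX_0$-signatures $\Adv$ observes are the honest certificates $\hatsigmai=\Sign_\Sig(\skX_0,(\pkXi,\cli))$ together with the answers to sign queries placed on the root, while the only $\skX_j$-signatures it observes are the values $\sigma'=\Sign_\Sig(\skX_j,m')$ returned on sign queries $(m',\cvj)$.

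Next I would split any winning forgery $(\cvj,m,\sigma)$, with $\sigma=(\pkX^\bullet,\sigma',\hatsigma^\bullet)$ satisfying both $\Verify_\Sig(\pkX_0,(\pkX^\bullet,\clj),\hatsigma^\bullet)=1$ and $\Verify_\Sig(\pkX^\bullet,m,\sigma')=1$, into two disjoint cases. In Case A we have $\pkX^\bullet\neq\pkX_j$: since node labels are distinct, the only honest certificate carrying the label $\clj$ is the one on $(\pkX_j,\clj)$, so $(\pkX^\bullet,\clj)$ was never signed by $\skX_0$ and $\hatsigma^\bullet$ is an existential forgery under $\pkX_0$. In Case B we have $\pkX^\bullet=\pkX_j$, so the honest certificate is reused but $\sigma'$ is a valid $\Sig$-signature under $\pkX_j$ on $m$; because $(m,\cvj)\notin\cQQ_\Sign$ the message $m$ was never signed with $\skX_j$, so $\sigma'$ is an existential forgery under $\pkX_j$.

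I would then build the reduction $\Advfirst$ against $\GameSigForgeSig$. It guesses the target $\cvstar\sample\cV$ and a case bit, aborting at the end if $\cvj\neq\cvstar$. It runs $\Set_{\DHKA}$ honestly (so it knows every derivation key $\cSi$ and can answer all $\CorrOracle$ queries and all sign queries for non-ancestor nodes), and it samples fresh random key pairs for the nodes in $\anc(\cvstar)$, thereby reproducing $\Hyb_{|\ancj|}$ conditioned on a correct guess. In Case A it plants $\pkX_0=\pk$, obtains the honest certificates and the answers to root sign queries from its own signing oracle, and outputs $\bigl((\pkX^\bullet,\clj),\hatsigma^\bullet\bigr)$. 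In Case B it plants $\pkX_{\cvstar}=\pk$, signs all certificates itself with the now-known $\skX_0$, answers sign queries for $\cvstar$ through its signing oracle, and outputs $(m,\sigma')$. The case analysis shows the output is a valid forgery whenever the matching case occurs and the guess is correct, so $\prob{\Hyb_{|\ancj|}=1}\leq 2\,|\cV|\cdot\negl$, which is negligible and contradicts the unforgeability of $\Sig$.

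The step I expect to be the main obstacle is the freshness argument in Case A. A sign query $(m',\cvzero)$ does produce $\Sign_\Sig(\skX_0,m')$, so to guarantee that the adversary cannot have legitimately obtained a signature under $\skX_0$ on the certificate message $(\pkX^\bullet,\clj)$, I would require the certificate messages $(\pkX,\cl)$ and the transaction message space $\cMM$ to be domain-separated (e.g.\ via a fixed prefix), so that no transaction query can ever yield a signature on a certificate-formatted message; with this separation the honest certificates remain the unique $\skX_0$-signatures on label-$\clj$ messages and freshness follows. The remaining bookkeeping is routine: the descendant keys an adversary can recompute after a corruption stay consistent with the published public keys, because no descendant of a non-ancestor of $\cvj$ can itself be an ancestor of $\cvj$, so the randomized keys in $\anc(\cvj)$ are never reachable through corruption.
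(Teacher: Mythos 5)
Your proof is correct and follows essentially the same route as the paper: its reduction likewise guesses the forged node $\cvstar \sample \cV$ together with a case bit $d$ (your Case A/Case B), plants the challenge key $\pk^*$ as $\pkX_0$ when $d=0$ or as $\pkX_{\cvstar}$ when $d=1$, runs the DHKA setup honestly so it can answer all corruption queries, aborts unless the guess matches the realized case (losing the same factor $2\,|\cV|$), and extracts either the certificate forgery $((\pkforgery_j,\clj),\hatsigmaforgery_j)$ or the transaction forgery $(\mforgery,\sigmaforgery)$ exactly as you do. Your domain-separation caveat for sign queries on the root is a genuine subtlety the paper passes over silently --- as written, its reduction answers $(m,\cvzero)$ sign queries by computing $\Sign_{\Sig}(\skX_0,m)$ even in the $d=0$ branch where $\skX_0$ is unknown, and routing those queries through the external signing oracle instead would indeed require your prefix separation between certificate messages and transaction messages for the freshness claim to go through --- so flagging it strengthens rather than departs from the paper's argument.
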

\begin{proof}
We assume that there exists a DAG $\cG = (\cV, \cE)$ and an adversary \Adv{} that has a non-negligible advantage against $\Hyb_{|\anc(\cvj)|, \Adv}(\allowbreak\secparam, \cG)$.
Then, we build an adversary $\Adv'{}$ against \GameSigForgeSig{} as follows:
\begin{enumerate}
    \item $\Adv'{}$ receives $\pk^*$ from the challenger.
    \item $\Adv'{}$ flips a bit $d \sample \bin$ and samples at random $\cvstar{} \sample \cV$ and $\cSeed \sample \cSS{}$.
    \item $\Adv'{}$ simulates $\Set_{\Pi}$. It runs $\PubSec{} = \sdSetDHKA{}$.
        If $d = 0$, it sets $\pkX_0 = \pk^*$; otherwise it runs $(\skX_0, \pkX_0) = \KeygenECDSA(\secparam;\cx{0})$ where $(\cS{0}, \cx{0}) = \Sec(\cv{0})$.
        Lastly, $\Adv'{}$ executes the remaining steps of $\Set_{\Pi}$, except that it replaces~\Cref{enum:w_set_proof_inner} and~\Cref{enum:sign_oracle_proof_inner} with the following:
        \begin{description}
            \item[\Cref{enum:w_set_proof_inner}:] $\Adv'{}$ proceeds as follow:
                \begin{itemize}
                    \item If $\cvi \in \ancstar{} \setminus \set{\cvstar{}}$, then compute $(\skX_i, \pkX_i) \sample \allowbreak \Keygen_{\Sig}(\secparam)$.
                    \item If $\cvi =\cvstar{}$, set $\pkX_i = \pk^*$ if $d=1$; otherwise run $(\skX_i, \pkX_i) \sample \Keygen_{\Sig}(\secparam)$.
                    \item Otherwise (if $\cvi \not\in \ancstar{}$), run $(\skX_i, \pkX_i) = \Keygen_{\Sig}(\allowbreak\secparam; \cxi)$ where $(\cSi, \cxi) = \Sec(\cvi)$.
                \end{itemize}
            \item[\Cref{enum:sign_oracle_proof_inner}:] If $d=1$, then retrieve the label $\cli{} = \Pub(\cvi)$ and compute $\hatsigmai \sample \Sign_{\Sig}(\skX_0, (\pkX_i, \cli))$; otherwise, set $\hatsigmai \sample \SignOracle^\Sigma((\pkX_i, \cli))$.
        \end{description}
        Finally, $\Adv'{}$ sends to $\Adv{}$ the public parameters $\pp{} = (\cG{}, \Pub{}, \crtset{}, \pkX_0)$.
    \item $\Adv{}'$ answers oracle queries in the following way:
        \begin{itemize}
            \item On input $\cvi{}$ for $\CorrOracle^{\Pi}$, $\Adv{}'$ returns $\derkeyi=\cSi$ where $(\cSi, \cxi) = \Sec(\cvi)$.
            \item On input $(m, \cvi{})$ for $\SignOracle^{\Pi}$, if $d=1 \land \cvi = \cvstar{}$, $\Adv'{}$ sets $\sigma' \sample \SignOracle^\Sigma(m)$; otherwise, it computes $\sigma' \sample \Sign_{\Sig}(\skX_i,\allowbreak m)$.
                Lastly, it returns $\sigma = (\pkX_i, \sigma', \hatsigmai)$.
        \end{itemize}
    \item \Advfirst{} receives the forgery $(\cvj{}, \mforgery, \sigmaforgerytilde)$ such that $\sigmaforgerytilde = (\pkforgery_j, \sigmaforgery, \hatsigmaforgery_j)$ and aborts the simulation if $\cvstar \neq \cvj{} \lor (d = 0 \land \pkforgery_j = \pkX_j{}) \lor (d = 1 \land \pkforgery_j \neq \pkX_j{})$.
    Otherwise, if $d=0$, it sends the forgery $((\pkforgery_j, \clj),\hatsigmaforgery_j)$ to challenger where $\clj{} = \Pub(\cvj)$; if $d=1$ sends $(\mforgery,\sigmaforgery)$.
\end{enumerate}
Let \eabort{} be the event that \Advfirst{} wins the game \GameSigForgeSig{} and aborts the simulation.
First of all, note that:
\ifFULL
\begin{align*}
    \enabort{} &= \lnot\left[\cvstar \neq \cvj{} \lor (d = 0 \land \pkforgery_j = \pkX_j{}) \lor (d = 1 \land \pkforgery_j \neq \pkX_j{}) \right]\\
    &= \left[\cvstar = \cvj{} \land \lnot(d = 0 \land \pkforgery_j = \pkX_j{}) \land \lnot (d = 1 \land \pkforgery_j \neq \pkX_j{}) \right] \\
    &= \left[\cvstar = \cvj{} \land (d = 1 \lor \pkforgery_j \neq \pkX_j{}) \land (d = 0 \lor \pkforgery_j = \pkX_j{}) \right] \\
    &= \left[\cvstar = \cvj{} \land ((d = 0 \land d = 1) \lor (d=1 \land \pkforgery_j = \pkX_j{}) \right.\\
    &\left. \qquad \lor (d = 0 \land \pkforgery_j \neq \pkX_j{}) \lor (\pkforgery_j \neq \pkX_j{} \land \pkforgery_j = \pkX_j{}))\right] \\
    &=\left[\cvstar = \cvj{} \land ((d=1 \land \pkforgery_j = \pkX_j{}) \lor (d = 0 \land \pkforgery_j \neq \pkX_j{}))\right]
\end{align*}
\else
\begin{align*}
    \enabort{} &= \lnot\left[\cvstar \neq \cvj{} \lor (d = 0 \land \pkforgery_j = \pkX_j{}) \lor \right.\\
    & \qquad \left. (d = 1 \land \pkforgery_j \neq \pkX_j{}) \right]\\
    &= \left[\cvstar = \cvj{} \land (d = 1 \lor \pkforgery_j \neq \pkX_j{}) \land \right.\\
    & \qquad \left. (d = 0 \lor \pkforgery_j = \pkX_j{}) \right] \\
    &= \left[\cvstar = \cvj{} \land ((d = 0 \land d = 1) \lor \right.\\
    & \qquad (d=1 \land \pkforgery_j = \pkX_j{}) \lor (d = 0 \land \pkforgery_j \neq \pkX_j{}) \\
    & \qquad \lor \left. (\pkforgery_j \neq \pkX_j{} \land \pkforgery_j = \pkX_j{}))\right] \\
    &=\left[\cvstar = \cvj{} \land ((d=1 \land \pkforgery_j = \pkX_j{}) \lor \right.\\
    & \qquad \left. (d = 0 \land \pkforgery_j \neq \pkX_j{}))\right]
\end{align*}
\fi
Let $\prob{\pkforgery_j = \pkX_j} = p$. We can express $\prob{\enabort{}}$ in the following way:
\ifFULL
\begin{align*}
    \prob{\enabort{}} &= \prob{\cvstar = \cvj{} \land (d = 0 \land \pkforgery_j \neq \pkX_j{}) \lor (d = 1 \land \pkforgery_j = \pkX_j{})} \\
    &= \prob{\cvstar = \cvj{}} \cdot \left(\prob{d = 0 \land \pkforgery_j \neq \pkX_j{}} + \prob{d = 1 \land \pkforgery_j = \pkX_j{}}\right) \\
    &= \prob{\cvstar = \cvj{}} \cdot \left(\prob{d = 0} \cdot \prob{\pkforgery_j \neq \pkX_j{}} + \prob{d = 1} \cdot \prob{\pkforgery_j = \pkX_j{}}\right) \\
    &= \frac{1}{\mid \cV{} \mid} \cdot \left(\frac{1 - p}{2}  + \frac{p}{2}\right) = \frac{1}{2 \cdot \mid \cV{} \mid}
\end{align*}
\else
\begin{align*}
    \prob{\enabort{}} &= \Pr\left[\cvstar = \cvj{} \land (d = 0 \land \pkforgery_j \allowbreak\neq \pkX_j{}) \lor \right.\\
    & \left. \qquad(d = 1 \land \pkforgery_j = \pkX_j{}) \right]\\
    &= \prob{\cvstar = \cvj{}} \cdot \left(\prob{d = 0 \land \pkforgery_j \neq \pkX_j{}} + \right.\\
    \allowbreak
    &\left. \qquad \prob{d = 1 \land \pkforgery_j = \pkX_j{}}\right) \\
    &= \prob{\cvstar = \cvj{}} \cdot \left(\prob{d = 0} \cdot \prob{\pkforgery_j \neq \pkX_j{}} \right.\\
    &\left. \qquad + \prob{d = 1} \cdot \prob{\pkforgery_j = \pkX_j{}}\right) \\
    &= \frac{1}{\mid \cV{} \mid} \cdot \left(\frac{1 - p}{2}  + \frac{p}{2}\right) = \frac{1}{2 \cdot \mid \cV{} \mid}
\end{align*}
\fi
Let $\cQQ^{\Sig}_{\Sign}$ and $\cQQ^{\Pi}_{\Sign}$ be respectively the set of queries submitted by $\Adv'{}$ to $\SignOracle^{\Sig}$ and the set of queries submitted by $\Adv{}$ to $\SignOracle^{\Pi}$.
Conditioned on \enabort{} and since $\Adv{}$ is a valid adversary for \GameHSigForge, then, with non-negligible probability, $\Verify_{\Pi}(\pk_j, \mforgery, \sigmaforgerytilde) = 1$ if and only if $\Verify_{\Sig}(\pkK, (\pkforgery_j, \clj{}), \hatsigmaforgery_j) =1$ and $\Verify_{\Sig}(\pkforgery_j, \allowbreak \mforgery, \sigmaforgery) = 1$, where $\pk_j = (\pkX_0, \clj)$ and $\clj = \Pub(\cvj)$.
Note that $\Adv'{}$ outputs a valid forgery for $\GameSigForgeSig{}$ with probability $\frac{1}{2}$:
\begin{enumerate}
\item Whenever $d=0$, we have $\pkX_0 = \pk^*$ and $\pkforgery_j \neq \pkX_j$.
This allows us to conclude that $\Adv'{}$ never asked $(\pkforgery_j, \clj)$ to oracle $\SignOracle^{\Sig}$ (\emph{I.e.,} $(\pkforgery_j, \clj) \not\in \cQQ^{\Sig}_{\Sign}$). Hence, $((\pkforgery_j, \clj),\hatsigmaforgery_j)$ is a valid forgery for $\GameSigForgeSig{}$.
\item On the other hand, if $d=1$, we have $\pkforgery_j = \pkX_j = \pk^*$. Since, $\Adv{}$ is a valid adversary it must produces a valid signature for a new fresh message. Hence, we can conclude that $(\cvstar{}, \mforgery) \not\in \cQQ^{\Pi}_{\Sign}$ and $(\mforgery, \sigmaforgery)$ is a valid forgery for $\GameSigForgeSig{}$.
\end{enumerate}

\noindent This concludes the proof.
\end{proof}

By combining \Cref{lem:inductive_proof_cert_arcula} and \Cref{lem:reduction_proof_cert_arcula} we have that \Cref{const:cert-arcula} is hierarchically existentially unforgeable.

\end{document}